\numberwithin{equation}{section}
\newtheorem{definition}{Definition}[section]
\newtheorem{theorem}{Theorem}[section]
\newtheorem{corollary}{Corollary}[section]
\DeclareMathOperator{\R}{\mathbb{R}}
\DeclareMathOperator{\mE}{\mathcal{E}}
\DeclareMathOperator{\mF}{\mathcal{F}}
\DeclareMathOperator{\mG}{\mathcal{G}}
\DeclareMathOperator{\mL}{\mathcal{L}}
\DeclareMathOperator{\mM}{\mathcal{M}}
\DeclareMathOperator{\mQ}{\mathcal{Q}}
\DeclareMathOperator{\mR}{\mathcal{R}}
\DeclareMathOperator{\mV}{\mathcal{V}}
\DeclareMathOperator{\mW}{\mathcal{W}}
\DeclareMathOperator{\Oo}{\mathcal{O}}
\DeclareMathOperator{\ad}{\text{ad}}
\begin{document}

\title{Analytic RFR Option Pricing with Smile and Skew}
\author[*]{Colin Turfus}
\author[*,$\dag$]{Aurelio Romero-Bermudez}
\affil[*]{Deutsche Bank}
\affil[$\dag$]{ABN Amro}

\date{Initial Version: December 22, 2022 \\[.1cm] Current Version: \today }
\maketitle

\begin{abstract}
We extend the short rate model of \cite{SmilesWithoutTears} to facilitate accurate arbitrage-free analytic pricing of SOFR, SONIA or ESTR caplets, i.e.~options on backward-looking compounded rates payments, in a manner consistent with the smile and skew levels observed in the market. These caplet pricing formulae and corresponding LIBOR or term-rate caplet results are translated into effective variance (implied vol) formulae, which are seen to be of a particularly simple form. They show that the model is essentially equivalent to imposing on a Hull-White model an effective variance which is a quadratic function of the moneyness parameter (rather than a constant) for any given maturity. Results are also illustrated graphically.
\end{abstract}

\section{Introduction} \label{Introduction}

The short rate model of \cite{SmilesWithoutTears} allows analytic pricing of caplets based on term rates such as LIBOR with smile and skew. We extend this model here to support analytic pricing of options on \emph{compounding risk-free rates} (RFR) such as SOFR, SONIA or ESTR, in a manner consistent with the smile and skew levels observed in the market. Our approach is based on the observation that for small perturbations from equilibrium under this model the short rate is essentially a linear function of a (mean-reverting) Gaussian variable $y$. This makes the situation similar to that captured by the model of \cite{H-W}. An exact analytic pricing kernel for this model was derived by \cite{Steenkiste,Exact_H-W} and the extension of this kernel to address compounding rates by \cite{PerturbationMethods,Compounded_Rates}.

For a review of related work on options on compounded rates, the reader is referred to \cite{Compounded_Rates}, which follows the approach of \cite{Steenkiste} in introducing the integrated short rate as an additional state variable.  Option prices are in this way derived analogous to the formulae of \cite{Henrard_H-W}, including taking account of delayed payment of the option premium. This work was extended to incorporate counterparty credit risk by \cite{RiskyCompoundedRatesRisk} through a highly accurate asymptotic approximation.

In relation to analytic option pricing with smile and skew, the reader is referred to \cite{SmilesWithoutTears,WhatShortRateModel}. Their approach, which we extend below, is based on modelling the short rate as a $\sinh$ function of a Gaussian Ornstein-Uhlenbeck process, which they argue is equivalent to assuming a quadratic dependence of the local volatility on the short rate. They make the point that little work has been done in this space, albeit that the Beta Blend model proposed by \cite{BetaBlendGreen} interpolating between the model of \cite{H-W} and that of \cite{B-K} can be viewed as allowing a term structure of skew to be specified. Also analytic LIBOR option and swaption prices incorporating smile and skew are available under the SABR model, as discussed by \cite{ModernSABRAnalytics}. But this approach allows only one rate to be modelled at a time and does not constitute a short rate model capable of capturing multiple rates within a unified framework.

We single out for special mention here the work of \cite{Antonov}. Like us they started out by considering Gaussian variables driven by Ornstein-Uhlenbeck dynamics and took the short rate to be a function thereof. One of the two cases they considered%
\footnote{The other case addressed was the well-known lognormal rates model of \cite{B-K} (see further \S \ref{sec:Description} below).}
was a quotient of affine functions of exponentials of the Gaussian, under which assumption the short rate was constrained to lie between an upper and a lower limit. They did not like us attempt to capture volatility smile and skew explicitly. Nor did they consider a more general formulation involving an affine function of \emph{two} exponentials with different exponents as we do here. Also, the (first-order) perturbation expansion they used to obtain analytic option prices was based on a small lognormal-volatility assumption, which limits the accuracy which can be obtained, particularly under low rates where lognormal volatilities can be quite high (in excess of 50\%). We expect our expansion based on the smallness of the rates themselves to be more accurate, especially in circumstances of low rates.

We look to obtain analytic expressions for prices of SOFR options which take account of smile and skew by a method similar to that employed by \cite{SmilesWithoutTears} to price LIBOR caplets based on an extension of the Bachelier pricing kernel (for a Gaussian underlying). We do this by extending instead the Hull-White compounded rates kernel of \cite{Compounded_Rates}. The challenge is that, as well as observing that the representations of the interest rate are similar (linear functions of a Gaussian variable) with and without smile or skew, we need to find a means of taking advantage of that observation systematically to obtain an explicit mathematical representation of the perturbation which must be applied to the known (Hull-White) analytic kernel. We do this using the operator expansion method expounded by \cite{PerturbationMethods}.

We start off in \S \ref{sec:Description} by setting out the pricing equation for our model. The associated pricing kernel is presented in \S\ref{sec:PricingKernel} as a perturbation expansion based on a low-rates assumption, with a summary proof provided in Appendix \ref{sec:PricingKernelProof}. The calibration of the model to the term structure of forward rates and volatility (with skew and smile) is considered in \S \ref{sec:Calibration}, where bond price formulae are also set out. The further application of the kernel to the pricing of forward rates and SOFR/ESTR caplets is considered in \S \ref{sec:Applications}, with a proof of the latter in Appendix \ref{sec:CapletPricingProof}. A LIBOR caplet pricing formula is also deduced providing more accuracy than that of \cite{SmilesWithoutTears}. The caplet prices are also give a convenient expression as implied volatilities, which simplifies calculation and helps ensure consistency of behaviour far from the money. Numerical calculations of caplet prices and implied volatility surfaces based on the formulae deduced are illustrated in \S \ref{sec:Numerical}. Conclusions are set out in \S \ref{sec:Conclusions}.

\section{Model Description}\label{sec:Description}

We consider the short rate model of \cite{SmilesWithoutTears} which incorporates smile and skew into the Hull-White model through a variable transformation. We shall find it convenient to work with the reduced variable $y_t$  defined through the following Ornstein-Uhlenbeck process:
\begin{equation}
dy_t = -\alpha(t) y_t dt + \sigma(t)\,dW_t,\label{eq:dy_t}
\end{equation}
with $W_t$ a Wiener process for $t\ge0$. This auxiliary variable is related to the short rate by $r_t =r(y_t,t)$ where
\begin{align}
	r(y,t)&:=\overline{r}(t) + R^*(t) + \frac{\sinh \gamma(t)(y+y^*(t))}{\gamma(t)}.\label{eq:r(y,t)}
\end{align}
Here $\overline{r},y^* :\mathbb{R}^+\to\mathbb{R}$ are the instantaneous forward rate and a skewness function, respectively, and $\sigma,\alpha, \gamma, R^*:\mathbb{R}^+\to\mathbb{R}^+$ are functions representing volatility, mean reversion rate, a smile factor and a convexity adjustment factor, respectively, all assumed to be piecewise continuous and bounded.
We further assume that $y_0=0$, with $t=0$ the ``as of'' date for which the model is calibrated. The function $R^*(t)$ is determined by calibration to the forward curve but will tend in the zero volatility limit to zero.%
\footnote{Strictly speaking, the function $y^*(\cdot)$ is the drift adjustment parameter which must be chosen to satisfy the no-arbitrage condition for a given choice of $R^*(t)$, so the latter should really be thought of as controlling the skew. However, the calculation of $R^*(t)$ as a power series of nested integrals involving $y^*(\cdot)$ is much more convenient than attempting to perform the equivalent process in reverse, so in practice we consider $y^*(t)$ to be inferred directly from the no-arbitrage condition, constructing the implied $R^*(t)$ using the mathematical techniques expounded below.}
The formal no-arbitrage constraint which determines this function is as follows
\begin{equation}
E\left[e^{-\int_{0}^tr_s\,ds}\right]=D(0,t)\label{eq:No_arbitrage}
\end{equation}
under the martingale measure for $0 < t\le T_m$, where $T_m$ is the longest maturity date for which the model is calibrated, and
\begin{equation}
D(t_1,t_2):=e^{-\int_{t_1}^{t_2}\overline{r}(s)\,ds}
\end{equation}
is the $t_1$-forward price of the $t_2$-maturity zero coupon bond. This is to be contrasted with the \cite{H-W} (H-W) model which is obtained by taking the limit as $\gamma(t)\to0$ in  \eqref{eq:r(y,t)}:
$$r(y,t)=R^*_{\text{H-W}}(t)+y$$
and the \cite{B-K} (B-K) model which is obtained by choosing
$$r(y,t) = R^*_{\text{B-K}}(t)e^y.$$
Notably our model reverts to the first of these in the limit as $\gamma(t)\to0$.

Since we are interested in pricing options on daily-compounded rates payments, we introduce a new variable $z_t$ representing the integral of the short rate, making use of the fact that daily compounding can be to a good approximation captured by assuming continuous compounding. We shall make the choice
\begin{equation}\label{eq:z_t}
	z_t:= \int_0^t (r_s - \overline{r}(s))ds.
\end{equation}

We consider the (stochastic) time-$t$ price of a European-style security which pays a cash amount $P(y_T,z_T)$ at maturity $T$, denoting this by  $f(y_t,z_t,t)$. We note in particular that the price of a $T$-maturity zero coupon bond is obtained by taking $P(y,z) = 1$. We will look to derive the functional form of $f(\cdot,\cdot,\cdot)$ implied by our model in this case and in the process to determine the conditions on $R^*(t)$ necessary to satisfy \eqref{eq:No_arbitrage}.

From the Feynman-Kac theorem we infer that $f(y,z,t)$ emerges as the solution to the following backward Kolmogorov diffusion equation:
\begin{equation}\label{eq:fPDE}
\frac{\partial f}{\partial t}-\alpha(t)y\frac{\partial f}{\partial  y} +\left(r(y,t)- \overline{r}(t) \right)\left(\frac{\partial}{\partial z}-1\right)f +\frac12\sigma^2(t) \frac{\partial^2 f}{\partial y^2} - \overline{r}(t)f = 0,
\end{equation}
with $r(y,t)$ given by \eqref{eq:r(y,t)}, subject to the termiinal condition $f(y,z,T)=P(y,z)$.

\section{Pricing Kernel}\label{sec:PricingKernel}

We define at this point the following notation for future reference:
\begin{align}
	\phi_r(t,v)&:=e^{-\int_t^v \alpha(u)du},\label{eq:phi_r}\\
	\Sigma_{rr}(t,v)&:=\int_t^v\phi_r^2(u,v)\sigma^2(u)du,\label{eq:Sigma_rr}\\
	\psi_r(t,v)&:= e^{\frac12\gamma^2(v)\Sigma_{rr}(t,v)},\label{psi_r}\\
	\Sigma_{rz}(t,v)&:=\int_t^v \psi_r(t,u) \phi_r(u,v)\Sigma_{rr}(t,u)du, \label{eq:Sigma_rz}\\
	\Sigma_{zz}(t,v)&:=2\int_t^v \psi_r(t,u) \Sigma_{rz}(t,u)du, \label{eq:Sigma_zz}\\
	\bm{\Sigma}(t,v)&:= \left(\begin{array}{c c}
		\Sigma_{rr}(t,v) &\Sigma_{rz}(t,v)\\
		\Sigma_{rz}(t,v)&\Sigma_{zz}(t,v)
		\end{array}\right),\label{eq:Sigma}\\
	B^*(t,v)&:=\int_t^v \psi_r(t,u)\phi_r(t,u)du,\label{eq:B*}\\
	B^+(t,t_1,v)&:=\frac{B^*(t,v)-B^*(t,t_1)}{\phi_r(t,t_1)}\cr
	&=\int_{t_1}^v \psi_r(t,u)\phi_r(t_1,u)du,\label{eq:B+}\\
	\mu^*(y,t,v)&:= B^*(t,v)(y +\Sigma_{rz}(0,t)) +\textstyle\frac12B^{*2}(t,v)\Sigma_{rr}(0,t).\label{eq:mu*}
\end{align}
We observe here that $\phi_r(t,v)$ represents the impact of mean reversion and $\psi_r(t,v)$ that of the counteracting dispersion from the mean induced by smile.

We look to obtain the pricing kernel or Green's function for \eqref{eq:fPDE}, defined as a (generalised) function $G(y,z,t;\eta,\zeta,T)$ such that the solution subject to the specified terminal condition can be obtained from the following convolution expression:
\begin{equation}\label{eq:G-def}
	f(y,z,t)=\iint_{\R^2}G(y,z,t;\eta,\zeta,T)P(\eta,\zeta)\,d\eta \,d\zeta.
\end{equation}
In the absence of exact analytic solutions, we seek the pricing kernel as a perturbation expansion in the limit that the deviations of the short rate from the forward rate curve are small under some suitable norm. To that end let us define the small parameter
\begin{equation}\label{eq:epsilon}
	\epsilon:= \left\|\gamma^2(\cdot)\Sigma_{rr}(\cdot,\cdot)\right\|
\end{equation}
and seek an asymptotic expansion of the pricing kernel in powers of $\epsilon$. To allow of a skew adjustment that contributes at the same order of approximation as the smile adjustment we shall make the assumption that $\|\gamma(\cdot)y^*(\cdot)\| = \Oo(\sqrt{\epsilon}).$%
\footnote{We have avoided being explicit about the range of $t$ over which the norms are assessed. A typical choice would be a weighted integral over $[0,T]$ for some finite $T$.}
Let us further express $R^*(\cdot)$ asymptotically as
\begin{equation}\label{eq:R*_expansion}
	R^*(t)=\sum_{n=1}^{\infty} R_n^*(t),
\end{equation}
with $R_n^*(\cdot)=\Oo(\epsilon^n)$.

To express our result conveniently, we introduce the functions
\begin{align}\label{eq:R_1^+-}
	R_1^{\pm}(y,t,t_1,v)&:=\psi_r(t,t_1)\frac{e^{\pm\gamma(t_1)(\phi_r(t,t_1)y +y^*(t_1)-B^+(t,t_1,v)\Sigma_{rr}(t,t_1)-\Sigma_{rz}(t,t_1))}} {2\gamma(t_1)},
\end{align}
and shift operators $\mM^\pm(t,t_1)$ defined for $t\le t_1$ by
\begin{align}
	\mM^{\pm}(t,t_1)f(y,z,\ldots)&:=f\left(y\pm\gamma(t_1)\Delta y(t,t_1), z \pm\gamma(t_1)\Delta z(t,t_1), \ldots\right) \label{eq:M+-}\\
	\Delta y(t,t_1)&:=\frac{\Sigma_{rr}(t,t_1)}{\phi_r(t,t_1)}, \label{eq:Delta_y}\\
	\Delta z(t,t_1)&:=\Sigma_{rz}(t,t_1) -B^*(t,t_1)\frac{\Sigma_{rr}(t,t_1)}{\phi_r(t,t_1)}.\label{eq:Delta_z}
\end{align}
Using this notation we define also the following operators:
\begin{align}
	\mR_1(t,t_1,v)&:=R_1^+(y,t,t_1,v)\mM^+(t,t_1) -R_1^-(y,t,t_1,v)\mM^-(t,t_1), \label{eq:R}\\
	\overline{\mR}_1(t,t_1,v)&:=R_1^+(y,t,t_1,v)\mM^+(t,t_1) +R_1^-(y,t,t_1,v) \mM^-(t,t_1), \label{eq:R__bar}\\
	\mG_1(t,t_1,v)&:=\mR_1(t,t_1,v) +R_1^*(t_1) -e^{\frac12\gamma^2(t_1)\Sigma_{rr}(t,t_1)} \bigg(\phi_r(t,t_1)(y +\Sigma_{rz}(0,t) +B^*(t,t_1)\Sigma_{rr}(0,t))\cr
	&\hspace{200pt}-B^+(t,t_1,v)\Sigma_{rr}(t,t_1)\bigg).\label{eq:mG_1}
\end{align}
Using the operator expansion method of \cite{PerturbationMethods} in the same manner as \cite{SmilesWithoutTears} we obtain the following result.

\begin{theorem}[Pricing Kernel]\label{PricingKernel}
	Making use of the above-defined notation, the pricing kernel for \eqref{eq:fPDE} can be written asymptotically in the limit as $\epsilon\to0$ as
\begin{equation}\label{eq:G_compounded}
	G(y,z,t;\eta,\zeta,v)=D(t,v)e^{-\mu^*(y,t,v)}\sum_{n=0}^\infty G_n(y,z,t;\eta,\zeta,v)
\end{equation}
with the $G_n(\cdot)=\Oo(\epsilon^n)$ and in particular
\begin{align}
	G_0(y,z,t;\eta,\zeta,v)&=N_2\left(\eta +\Sigma_{rz}(t,v) -\phi_r(t,v)y,\zeta -\mu^*(y,t,v) +\textstyle\frac12\Sigma_{zz}(t,v) -z; \bm{\Sigma}(t,v)\right), \label{eq:G_0}\\
	G_1(y,z,t;\eta,\zeta,v)&=\left(\int_t^v \mG_1(t,t_1,v)dt_1 -\mQ(t,v) \right) \left(\frac{\partial}{\partial z} -1\right) G_0(y,z,t;\eta,\zeta,v), \label{eq:G_1}\\
	G_2(y,z,t;\eta,\zeta,v)&=\Bigg(\int_t^v\int_t^{t_2} \mG_1(t,t_1,v) \mG_1(t,t_2,v)dt_1 dt_2 \left(\frac{\partial}{\partial z} -1\right)^2 \cr
	&\hspace{22pt}-\int_t^v\mG_1(t,t_1,v)dt_1 \mQ(t,v) \left(\frac{\partial}{\partial z} -1\right)^2\cr
	&\hspace{22pt}+\frac12\left(\mQ^2(t,v) -\int_t^v \Sigma_{rz}(t,t_1)(\gamma(t_1)\overline{\mR}_1(t,t_1,v)-1) dt_1\right) \left(\frac{\partial}{\partial z} -1\right)^2\cr
	&\hspace{22pt}+\int_t^v R_2^*(t_1)dt_1\bigg(\frac{\partial}{\partial z} -1\bigg)\Bigg) G_0(y,z,t;\eta,\zeta,v),\quad\label{eq:G_2}
\end{align}
with
\begin{align}\label{eq:Q}
	\mQ(t,v)&:=\frac{\Sigma_{rz}(t,v)}{\phi_r(t,v)}\left(\frac{\partial}{\partial y}-B^*(t,v)\frac{\partial}{\partial z}\right) +\Sigma_{zz}(t,v) \frac{\partial}{\partial z}
\end{align}
and $N_2(\cdot,\cdot;\bm{\Sigma})$ a bivariate Gaussian probability density function with covariance $\bm{\Sigma}$.
\end{theorem}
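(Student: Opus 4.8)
The plan is to construct $G$ by the operator-expansion method of \cite{PerturbationMethods}, exactly as \cite{SmilesWithoutTears} did in the LIBOR case, treating the smile, skew and convexity contributions in \eqref{eq:fPDE} as a perturbation of a Hull--White compounded-rates problem whose pricing kernel is known in closed form from \cite{Compounded_Rates}. I would begin by writing the operator in \eqref{eq:fPDE} as $\partial_t+\mathcal{A}(t)$ and splitting $\mathcal{A}(t)=\mathcal{A}_0(t)+\mathcal{A}_1(t)$, with
\[
  \mathcal{A}_0(t):=-\alpha(t)y\,\partial_y+\tfrac12\sigma^2(t)\,\partial_y^2+y(\partial_z-1)-\overline{r}(t)
\]
the Hull--White generator corresponding to short rate $\overline{r}(t)+y$, and $\mathcal{A}_1(t):=\left(R^*(t)+\tfrac{\sinh\gamma(t)(y+y^*(t))}{\gamma(t)}-y\right)(\partial_z-1)$ the remainder, whose size is governed by $\epsilon$ through the hypotheses $\|\gamma^2(\cdot)\Sigma_{rr}(\cdot,\cdot)\|=\epsilon$, $\|\gamma(\cdot)y^*(\cdot)\|=\Oo(\sqrt{\epsilon})$ and $R_n^*=\Oo(\epsilon^n)$ from \eqref{eq:R*_expansion}. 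The ansatz is $G=D(t,v)e^{-\mu^*(y,t,v)}\sum_{n\ge0}G_n$, with the prefactor $D(t,v)e^{-\mu^*}$ carrying the discounting and the smile-adjusted convexity in the $v$-forward measure and $G_0$ the bivariate Gaussian \eqref{eq:G_0} --- recognizable as the Hull--White compounded-rates transition density of \cite{Compounded_Rates} with the smile-induced convexity resummed into $\mu^*$ and $\bm{\Sigma}$ --- which is verified directly against \eqref{eq:fPDE} at order $\epsilon^0$ using the defining identities \eqref{eq:phi_r}--\eqref{eq:mu*}.

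The higher $G_n$ are then generated by a Dyson-type recursion: imposing \eqref{eq:fPDE} order by order, the order-$\epsilon^n$ equation determines $D(t,v)e^{-\mu^*}G_n$ by applying the free propagator of $\mathcal{A}_0$ to a source built from $\mathcal{A}_1$ (and its sub-leading pieces) acting on $G_{n-1}$ (resp.\ $G_{n-2}$), integrated over an intermediate time $t_1\in(t,v)$ (resp.\ times $t_1,t_2$). The crux --- and the step I expect to be the main obstacle --- is evaluating this convolution in closed form. Since $\tfrac{\sinh\gamma(y+y^*)}{\gamma}=\tfrac{1}{2\gamma}(e^{\gamma(y+y^*)}-e^{-\gamma(y+y^*)})$, a single insertion requires pushing a multiplicative exponential $e^{\pm\gamma(t_1)(y+y^*(t_1))}(\partial_z-1)$ past a bivariate-Gaussian propagator; carrying out the Gaussian integral, $e^{\pm\gamma y}$ shifts the mean of $(y,z)$ by covariance-weighted amounts and produces the scalar factor $\psi_r(t,t_1)=e^{\frac12\gamma^2(t_1)\Sigma_{rr}(t,t_1)}$ together with an exponential-of-shifted-argument prefactor. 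Transported back to the base point $(y,t)$, these are precisely the shift operators $\mM^{\pm}(t,t_1)$ of \eqref{eq:M+-}--\eqref{eq:Delta_z} (the $\Delta y,\Delta z$ being the covariance-weighted responses) and the prefactors $R_1^{\pm}(y,t,t_1,v)$ of \eqref{eq:R_1^+-}; the reference term ``$-y$'' conjugates to the last line of \eqref{eq:mG_1}, and $R^*(t_1)$ passes through unchanged. Assembling these gives $\int_t^v\mG_1(t,t_1,v)\,dt_1$ acting on $(\partial_z-1)G_0$. The additional operator $\mQ(t,v)$ of \eqref{eq:Q} arises because the perturbation coefficient is naturally a function of the intermediate-time state $(y_{t_1},z_{t_1})$: re-expressing it through base-point derivatives of $G_0$ generates, by Wick contraction against the Gaussian kernel, the covariance-weighted gradient $\tfrac{\Sigma_{rz}}{\phi_r}(\partial_y-B^*\partial_z)+\Sigma_{zz}\partial_z$. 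This yields \eqref{eq:G_1}.

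For \eqref{eq:G_2} one iterates with the ordered double insertion $\mG_1(t,t_1,v)\mG_1(t,t_2,v)$ integrated over $t<t_1<t_2<v$: the two drift factors give $(\partial_z-1)^2$, the $\mQ$ corrections combine into the $-\int_t^v\mG_1(t,t_1,v)\,dt_1\,\mQ(t,v)$ and $\tfrac12\mQ^2(t,v)$ contributions, and the next order of \eqref{eq:R*_expansion} supplies $\int_t^v R_2^*(t_1)\,dt_1\,(\partial_z-1)$. The one genuinely new ingredient at second order is the Wick contraction between the two insertions, involving the $r$--$z$ covariance $\Sigma_{rz}$, which produces $-\tfrac12\left(\int_t^v\Sigma_{rz}(t,t_1)(\gamma(t_1)\overline{\mR}_1(t,t_1,v)-1)\,dt_1\right)(\partial_z-1)^2$; the symmetric operator $\overline{\mR}_1$ of \eqref{eq:R__bar} appears here in place of $\mR_1$ because it is the $e^{+\gamma}e^{+\gamma}$ and $e^{-\gamma}e^{-\gamma}$ pairings that survive the contraction. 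Collecting all contributions at each order in $\epsilon$ yields \eqref{eq:G_0}--\eqref{eq:G_2}; as a final check one resubstitutes the full ansatz into \eqref{eq:fPDE}, using the defining relations for $\phi_r,\psi_r,\Sigma_{rr},\Sigma_{rz},\Sigma_{zz},B^*,B^+$, to confirm that the residual is $\Oo(\epsilon^3)$ and that $G\to\delta(y-\eta)\,\delta(z-\zeta)$ as $t\to v$.

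The heart of the argument --- and what I expect to be the main obstacle --- is thus the operator algebra of the single insertion: showing that conjugating the multiplicative exponential perturbation through the bivariate-Gaussian propagator reproduces exactly the prefactors $R_1^{\pm}$ and shift operators $\mM^{\pm}$, an Ornstein--Uhlenbeck analogue of a Baker--Campbell--Hausdorff computation complicated by the non-commuting multiplicative $\partial_z$-coupling, so that all intermediate-time dependence collapses cleanly onto base-point derivatives of $G_0$. The second-order bookkeeping --- in particular isolating the self-contraction term and verifying the coefficients of the $\mQ^2$ and $\gamma\overline{\mR}_1$ pieces, alongside the order-counting needed to confirm $G_n=\Oo(\epsilon^n)$ under the stated scaling --- is a secondary but still delicate task, and the systematic machinery for both is precisely the operator-expansion calculus of \cite{PerturbationMethods}, to which the detailed computation in Appendix~\ref{sec:PricingKernelProof} appeals.
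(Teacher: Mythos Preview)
Your overall strategy---perturb a Hull--White compounded-rates Gaussian kernel by the smile/skew residual via a Dyson expansion---is sound and would lead to the stated result, but it differs structurally from the paper's route, and your description contains one internal tension that the paper's method sidesteps.

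The paper does \emph{not} start from the two-dimensional Hull--White kernel and perturb. It starts from the \emph{one-dimensional} Ornstein--Uhlenbeck kernel for $y$ alone, writes the full evolution operator as a time-ordered exponential $\mE_t^v(\mL_0+\mV_0)$ with $\mL_0$ the OU generator plus deterministic discounting and $\mV_0=(r-\overline r)(\partial_z-1)$, and conjugates $\mV_0$ through the 1D propagator to obtain an operator $\mW(t,u)$; this conjugation is what first produces the $\psi_r=e^{\frac12\gamma^2\Sigma_{rr}}$ factors. It then \emph{successively factors} $\mE_t^v(\mW)=\mE_t^v(\mW_2)\,\mE_t^v(\mL_2)\,\mE_t^v(\mL_1)$ via the commutator identity $\mW_k=\mE_t^u(\ad_{\mL_k})(\mV_k)$, with $\mL_1,\mL_2$ chosen so that applying $\mE_t^v(\mL_2)\mE_t^v(\mL_1)$ to the 1D Gaussian \emph{builds} the bivariate density $G_0$ (raising the dimension from one to two), while the residual $\mW_3$ is genuinely $\Oo(\epsilon)$. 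After commuting $e^{-\mu^*}$ to the left, the operator $\mQ$ drops out algebraically as the integrated difference between $\mL_3$ and the pieces already absorbed into $G_0$ and $e^{-\mu^*}$---not as a Wick contraction of an intermediate-time insertion, as you describe.

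The tension in your version is that your $\mathcal{A}_0$ is the \emph{pure} Hull--White generator, whose exact kernel has $\psi_r\equiv1$ in $B^*,\Sigma_{rz},\Sigma_{zz},\mu^*$; yet you take $G_0$ to be the paper's $\psi_r$-adjusted Gaussian and call the adjustment ``resummed'' without saying how. You can repair this either by accepting an $\Oo(\epsilon)$ mismatch between $D\,e^{-\mu^*}G_0$ and the $\mathcal{A}_0$-kernel and feeding it into the source for $G_1$, or by adopting the paper's device of subtracting operators $\mL_1,\mL_2$ that already carry $\psi_r$; the latter is exactly what makes the smile-adjusted covariances appear in $G_0$ from the outset. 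What your Duhamel route buys is conceptual directness and a familiar PDE-perturbation picture; what the paper's operator factorisation buys is that the $\psi_r$ dressing of mean reversion is captured exactly at leading order (so $G_0$ already improves on pure Hull--White), and the provenance of $\mQ$ and of the second-order $\gamma\,\overline{\mR}_1$ cross-term is purely algebraic, avoiding a separate Gaussian-integral/contraction bookkeeping.
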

\begin{proof}
A proof of this result is given in Appendix \ref{sec:PricingKernelProof} below. Note in \eqref{eq:G_2} that the shift operator in $\mG_1(t,t_1,v)$ acts on the $y$-variables in $\mG_1(t,t_2,v)$.
\end{proof}
What has effectively been achieved here is that representations of the $\Oo(1)$ drift and diffusion associated with $z$ and the $\Oo(\epsilon^{1/2})$ Hull-White stochastic discounting, as captured by $\mu^*(y,t,v) +\mQ(t,v)$, are subtracted out of the time-ordered exponential operator $\mW(t,u)$ in \eqref{eq:W} and incorporated directly into the modified Gaussian kernel $G_0(\cdot;\cdot)$ on which it acts. In this way a zeroth order operator is replaced by the first order operator $\mW_1(t,u)$ in \eqref{eq:W_1}. Only by use of this device does the asymptotic expansion of the (modified) operator become possible.

It will be observed that the leading-order contribution ($n=0$) is in its functional form identical to the Hull-White kernel of \cite{Compounded_Rates}, with the caveat that the mean reversion function $\phi_r(t,u)$ is adjusted up (reducing its impact) by inclusion of the smile factor $\psi_r(t,t_1)$ in \eqref{eq:B*}, with no impact from skew. This means that solutions based on our expansion are likely to retain better accuracy even at longer maturities than those of \cite{SmilesWithoutTears}. Note that we could have expressed our results as a perturbation operator acting on the Hull-White kernel but choose, for reasons of convenience of use, to multiply by the stochastic discounting operator in the ansatz \eqref{eq:G_compounded} \emph{after} applying the perturbation operators term-by-term.

\section{Fitting to the Forward Curve}\label{sec:Calibration}

Let us further define for future notational convenience
\begin{align}
	\mF_1(t,T)&:=\int_t^T \mG_1(t,t_1,T)dt_1\cr
	&\phantom{:}=\int_t^T (\mR_1(t,t_1,T)+R_1^*(t_1))dt_1 -\mu^*(y,t,T) +\frac12\Sigma_{zz}(t,T) \label{eq:F_1}\\
	\mF_2(t,T)&:=\int_t^T\left(\int_t^{t_2} \mG_1(t,t_1,T)\mG_1(t,t_2,T)dt_1 -R_2^*(t_2)\right)dt_2. \label{eq:F_2}
\end{align}
Note that $y$-shifts apply in the integral term in \eqref{eq:F_2}, but can in practice be ignored, since they give rise to terms impacting at higher than second order; also that, in deriving \eqref{eq:F_1}, we have made use of the identity \eqref{eq:B^+_identity}.

The calibration of our model to the forward interest rate market is achieved by considering zero coupon bond prices.
\begin{theorem}[Calibration] \label{Calibration}
	Applying \eqref{eq:G_compounded} to a unit payoff at $T$, the $T$-maturity zero coupon bond price associated with the pricing kernel \eqref{eq:G_compounded} is seen to be given asymptotically by
	\begin{align}
		F^T(y,t)\sim D(t,T)e^{-\mu^*(y,t,T)}\left(1-\mF_1(t,T)+\mF_2(t,T)\right).\qquad \label{eq:F^T}
	\end{align}
	Differentiating the calibration condition $F^T(0,0)=D(0,T)$ w.r.t.~$T$, applying this to second order accuracy and setting $T=t$ gives rise to the requirement that the first two terms of \eqref{eq:R*_expansion} are given by
	\begin{align}
		R_1^*(t)&=-R_1^+(0,0,t,t) +R_1^-(0,0,t,t)\cr
		&\quad-\psi_r(0,t)\int_0^t \phi_r(t_1,t)\Sigma_{rr}(0,t_1) (\gamma(t_1)(R_1^+(0,0,t_1,t)+R_1^-(0,0,t_1,t)) -\psi_r(0,t_1))dt_1\cr
		&=-\psi_r(0,t)\left(\frac{\sinh Y^*(t,t)} {\gamma(t)} +\int_0^t \psi_r(0,t_1) \phi_r(t_1,t) \Sigma_{rr}(0,t_1) \left(\cosh 	Y^*(t_1,t) -1\right)dt_1\right),\qquad \label{eq:R_1*}\\
		R_2^*(t)&\sim R_1^*(t) \int_0^t R_1^*(t_1)dt_1,\label{eq:R_2*}
	\end{align}
	where for $t_1\in[0,t]$,
	\begin{equation}\label{eq:Y^*}
			Y^*(t_1,t):=\gamma(t_1) \left(y^*(t_1) -B^+(0,t_1,t)\Sigma_{rr}(0,t_1) -\Sigma_{rz}(0,t_1)\right).
	\end{equation}
\end{theorem}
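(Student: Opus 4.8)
The plan is to feed the unit payoff $P\equiv 1$ into the kernel \eqref{eq:G_compounded}, take the resulting bond price from \eqref{eq:F^T}, and impose the no-arbitrage relation at the calibration point $y=0$, $t=0$. First I would note that $\Sigma_{rr}(0,0)=\Sigma_{rz}(0,0)=0$, each being an integral over the degenerate interval $[0,0]$, so that \eqref{eq:mu*} gives $\mu^*(0,0,T)=0$ and the exponential prefactor in \eqref{eq:F^T} is unity. The condition $F^T(0,0)=D(0,T)$ then collapses, to the accuracy of \eqref{eq:F^T}, to $\mathcal{F}_1(0,T)\big|_{y=0}=\mathcal{F}_2(0,T)\big|_{y=0}$, where the operator symbols of \eqref{eq:F_1}--\eqref{eq:F_2} are understood to act on the constant function $1$; in particular $\mathcal{R}_1(0,t_1,T)\cdot 1=R_1^+(0,0,t_1,T)-R_1^-(0,0,t_1,T)$, the shifts $\mathcal{M}^\pm$ leaving constants unchanged and $\mathcal{Q}$ --- a first-order differential operator in $y$, $z$ --- contributing nothing once $\eta$, $\zeta$ are integrated out. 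As this identity holds for every $T$ and both sides vanish at $T=0$, I would differentiate in $T$ and relabel $T\to t$, obtaining a pointwise constraint.

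For $R_1^*(t)$ the term $\mathcal{F}_2=\Oo(\epsilon^2)$ is negligible at this order, leaving $\partial_T\mathcal{F}_1(0,T)\big|_{y=0}=0$. Differentiating the right-hand side of \eqref{eq:F_1} in $T$ produces a boundary (upper-limit) term $R_1^+(0,0,T,T)-R_1^-(0,0,T,T)+R_1^*(T)$, a Leibniz term $\int_0^T\partial_T\big(R_1^+-R_1^-\big)(0,0,t_1,T)\,dt_1$, and the term $\tfrac12\partial_T\Sigma_{zz}(0,T)=\psi_r(0,T)\Sigma_{rz}(0,T)$ coming from \eqref{eq:Sigma_zz} (the $\mu^*$ term contributing nothing because $\mu^*(0,0,T)\equiv 0$); solving for $R_1^*(T)$ yields the first form of \eqref{eq:R_1*}. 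To pass to the hyperbolic form, I would substitute $\partial_TB^+(0,t_1,T)=\psi_r(0,T)\phi_r(t_1,T)$ from \eqref{eq:B+}, hence $\partial_TY^*(t_1,T)=-\gamma(t_1)\Sigma_{rr}(0,t_1)\psi_r(0,T)\phi_r(t_1,T)$ from \eqref{eq:Y^*}, use the elementary identities $R_1^+-R_1^-=\tfrac{\psi_r}{\gamma}\sinh Y^*$ and $\gamma\,(R_1^++R_1^-)=\psi_r\cosh Y^*$, and finally $\Sigma_{rz}(0,T)=\int_0^T\psi_r(0,t_1)\phi_r(t_1,T)\Sigma_{rr}(0,t_1)\,dt_1$ from \eqref{eq:Sigma_rz}; the boundary, Leibniz and $\Sigma_{zz}$ contributions then combine into the single $\cosh$-integral displayed in \eqref{eq:R_1*}.

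For $R_2^*(t)$ one retains $\mathcal{F}_2$ through \eqref{eq:F_2}. I would use the remark following \eqref{eq:F_2} to discard the $y$-shift cross-terms inside the $dt_1$-integral, after which the time-ordered double integral telescopes, $\int_0^T\!\int_0^{t_2}(\cdots)\,dt_1\,dt_2=\tfrac12\big(\int_0^T(\cdots)\,ds\big)^2$; using the first-order relation to cancel the $R_1^*$-free contributions, the second-order part of the calibration condition reduces to $\int_0^T R_2^*(t_2)\,dt_2=\tfrac12\big(\int_0^T R_1^*(s)\,ds\big)^2$ --- the restatement, through second order, that $1-\mathcal{F}_1+\mathcal{F}_2\approx e^{-\mathcal{F}_1}$ --- and differentiating in $T$ gives \eqref{eq:R_2*}.

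I expect the first-order step to be the main obstacle: carrying out the $T$-differentiation of the nested-integral expression \eqref{eq:F_1} while keeping straight which factor carries the operator shift (so that $\mathcal{R}_1$ reduces to $R_1^+-R_1^-$ on $1$), and then recognising the sum of the boundary term, the Leibniz integral and $\psi_r\Sigma_{rz}$ as the compact hyperbolic formula \eqref{eq:R_1*}. The $R_2^*$ derivation is comparatively routine once the higher-order $y$-shift terms are discarded as the cited remark permits.
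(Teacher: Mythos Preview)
The paper offers no proof beyond the two-line sketch embedded in the theorem statement, so your outline is necessarily more detailed than what appears there. Your derivation of the bond price \eqref{eq:F^T} and of $R_1^*(t)$ follows that sketch faithfully: integrating the kernel against a unit payoff collapses $\mathcal{M}^\pm$ and $\mathcal{Q}$ to their action on constants, $\mu^*(0,0,T)$ vanishes, and differentiating $\mathcal{F}_1(0,T)\big|_{y=0}=0$ in $T$ produces exactly the boundary, Leibniz and $\tfrac12\partial_T\Sigma_{zz}$ pieces you list, which recombine into the hyperbolic form via the identities you name.

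There is, however, a gap in your $R_2^*$ step. Once the $y$-shifts in $\mathcal{F}_2$ are discarded, the time-ordered double integral does telescope --- but to $\tfrac12\bigl(\int_0^T\mathcal{G}_1(0,s,T)\,ds\bigr)^2=\tfrac12\,\mathcal{F}_1(0,T)^2$, where $\mathcal{F}_1$ is the \emph{full} integral including the $R_1^*$ contribution, not just its $R_1^*$-free part. Your first-order calibration makes this quantity vanish identically in $T$ (you have $\mathcal{F}_1(0,0)=0$ and have forced $\partial_T\mathcal{F}_1=0$ for every $T$), so the telescoped term is zero, not $\tfrac12\bigl(\int_0^T R_1^*\bigr)^2$. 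You cannot ``cancel the $R_1^*$-free contributions'' and leave the latter standing: those contributions equal $-\int R_1^*$ exactly, and the square collapses. The non-trivial $R_2^*$ in \eqref{eq:R_2*} must therefore be generated by something your argument discards --- either the $y$-shift cross-terms in $\mathcal{G}_1(t,t_1,T)\,\mathcal{G}_1(t,t_2,T)$ (the paper's remark that these are higher than second order needs careful reading, since shifting the $y$-argument of $R_1^\pm(y,\cdot)$ by $\gamma\Delta y=O(\gamma\Sigma_{rr})$ does produce $O(\epsilon^2)$ corrections), or the $-\tfrac12\int\Sigma_{rz}(\gamma\overline{\mathcal{R}}_1-1)$ piece of \eqref{eq:G_2}, which survives integration against the unit payoff but is absent from the definition \eqref{eq:F_2}. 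The re-exponentiation heuristic as you have stated it does not close this gap.
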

We observe that, excluding the bracketed expression which provides the main skew-smile adjustment, \eqref{eq:F^T} is essentially a Hull-White representation of the bond price making use of a modified mean reversion factor in place of the usual $\phi_r(t,u)$ as the integrand in \eqref{eq:B*}. A first-order representation which should suffice for many purposes, is obtained by neglecting $\mF_2(t,T)$ in \eqref{eq:F^T}. Provided the leading-order skew-smile correction terms remain small, the approximation should be a good one: the variable $y$ would have to take on improbably large values for this not to be the case. Note that these results are, up to second order, equivalent to those presented in \cite{SmilesWithoutTears}, as might be expected since the model is unchanged here, only the method of building the expansion slightly different.

\section{Applications}\label{sec:Applications}

\subsection{Forward Rates}\label{FwdRate}

An expression for forward rates is easily deduced from our zero coupon bond formula.

\begin{corollary}[Instantaneous Forward Rate]
	The instantaneous forward rate is given by
	\begin{align}\label{eq:f^T}
		f^T(y,t)&=\overline{r}(T)+\psi_r(t,T)\phi_r(t,T)(y +\Sigma_{rz}(0,t) +B^*(t,T) \Sigma_{rr}(0,t))\cr 	&\quad+\frac{D(t,T)e^{-\mu^*(y,t,T)}}{F^T(y,t)}\Bigg(\mG_1(t,T,T) -\int_t^T
		 \mG_1(t,t_1,T)\mG_1(t,T,T)dt_1 +R_2^*(T)\Bigg) +\Oo(\epsilon^{3}),
	\end{align}
	with $F^T(y,t)$ as given by Theorem \ref{Calibration}. The corresponding first order version of this result is obtained by ignoring the second two terms in parentheses and taking a first order representation of $F^T(y,t)$.
\end{corollary}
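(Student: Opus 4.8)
The instantaneous $T$-maturity forward rate is by definition $f^T(y,t)=-\partial\log F^T(y,t)/\partial T$, with $F^T(y,t)$ the state-dependent zero-coupon bond price of Theorem \ref{Calibration}. The plan is therefore to substitute the bond formula \eqref{eq:F^T} into this relation, differentiate with respect to the maturity, and collect terms up to second order in $\epsilon$. Writing $F^T(y,t)\sim D(t,T)e^{-\mu^*(y,t,T)}\bigl(1-\mF_1(t,T)+\mF_2(t,T)\bigr)$ and taking $-\partial_T\log$ of each factor (noting $\bigl(1-\mF_1+\mF_2\bigr)^{-1}=D(t,T)e^{-\mu^*(y,t,T)}/F^T(y,t)$ to recombine the last factor) gives
\[
f^T(y,t)=-\frac{\partial}{\partial T}\log D(t,T)+\frac{\partial}{\partial T}\mu^*(y,t,T)+\frac{D(t,T)e^{-\mu^*(y,t,T)}}{F^T(y,t)}\left(\frac{\partial}{\partial T}\mF_1(t,T)-\frac{\partial}{\partial T}\mF_2(t,T)\right).
\]
The three groups of terms on the right are then evaluated in turn.

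The first is immediate: from the definition of $D(t,T)$ one has $-\partial_T\log D(t,T)=\overline{r}(T)$. For the second, \eqref{eq:mu*} gives $\mu^*(y,t,v)=B^*(t,v)(y+\Sigma_{rz}(0,t))+\frac12 B^{*2}(t,v)\Sigma_{rr}(0,t)$ while \eqref{eq:B*} gives $\partial_v B^*(t,v)=\psi_r(t,v)\phi_r(t,v)$, so that $\partial_T\mu^*(y,t,T)=\psi_r(t,T)\phi_r(t,T)\bigl(y+\Sigma_{rz}(0,t)+B^*(t,T)\Sigma_{rr}(0,t)\bigr)$; this is exactly the second term of \eqref{eq:f^T}.

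The substance of the argument is the third group. Applying Leibniz's rule to $\mF_1(t,T)=\int_t^T\mG_1(t,t_1,T)\,dt_1$ as in \eqref{eq:F_1} and to $\mF_2(t,T)$ in the form \eqref{eq:F_2}, the endpoint contributions (at $t_1=T$, respectively $t_2=T$) produce precisely $\mG_1(t,T,T)-\int_t^T\mG_1(t,t_1,T)\mG_1(t,T,T)\,dt_1+R_2^*(T)$, i.e.\ the bracketed expression in \eqref{eq:f^T}. It then remains to dispose of the contributions in which $\partial_T$ acts on the third ($v$-)argument of $\mG_1$. Inspecting \eqref{eq:mG_1} and \eqref{eq:R_1^+-}, that argument enters only through $B^+(t,t_1,v)$ — in the exponents of the $R_1^{\pm}$ and in the final bracket of \eqref{eq:mG_1} — and \eqref{eq:B+} gives $\partial_v B^+(t,t_1,v)=\psi_r(t,v)\phi_r(t_1,v)$. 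Collecting these, using the definition \eqref{eq:R__bar} of $\overline{\mR}_1$, yields
\[
\frac{\partial}{\partial v}\mG_1(t,t_1,v)=\Sigma_{rr}(t,t_1)\,\psi_r(t,v)\,\phi_r(t_1,v)\left(e^{\frac12\gamma^2(t_1)\Sigma_{rr}(t,t_1)}-\gamma(t_1)\overline{\mR}_1(t,t_1,v)\right).
\]
I would then expand $\gamma(t_1)\overline{\mR}_1(t,t_1,v)$ in the small shift amplitudes $\gamma\Delta y$, $\gamma\Delta z$ and in the smile exponent to show that the bracket is $\Oo(\epsilon)$ — its leading $\psi_r(t,t_1)$ piece cancelling $e^{\frac12\gamma^2(t_1)\Sigma_{rr}(t,t_1)}=\psi_r(t,t_1)$ — so that after integration over $t_1$ and multiplication by $D(t,T)e^{-\mu^*(y,t,T)}/F^T(y,t)=\Oo(1)$ these residuals, together with the still smaller ones from differentiating the integrand of $\mF_2$, fall into the $\Oo(\epsilon^3)$ remainder. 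What survives is \eqref{eq:f^T}.

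The step I expect to be the main obstacle is the bookkeeping in this last part: tracking (as noted after Theorem \ref{PricingKernel}) that the $y$-shift inside $\mG_1(t,t_1,T)$ acts on the $y$-variables of $\mG_1(t,t_2,T)$, so that the $\mF_2$ endpoint term emerges in the stated order $\int_t^T\mG_1(t,t_1,T)\mG_1(t,T,T)\,dt_1$, and the order-counting needed to confirm that the $\partial_v\mG_1$ residuals are genuinely negligible to the claimed accuracy. Granting this, the first-order version stated in the corollary is immediate: dropping the two $\mF_2$-related terms in the parentheses and replacing $F^T(y,t)$ by its first-order form discards only $\Oo(\epsilon^2)$ contributions.
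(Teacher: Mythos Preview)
Your proposal is correct and follows exactly the paper's own route: the paper's proof consists of the single observation that $f^T(y,t)=-\partial_T\ln F^T(y,t)$ and an appeal to \eqref{eq:F^T}, while you have simply carried out that differentiation in detail, correctly identifying $-\partial_T\ln D=\overline r(T)$, $\partial_T\mu^*=\psi_r\phi_r(y+\Sigma_{rz}(0,t)+B^*\Sigma_{rr}(0,t))$, and the Leibniz endpoint terms from $\mF_1$ and $\mF_2$. Your treatment of the residual $\partial_v\mG_1$ contributions and the order-of-operators bookkeeping is more explicit than anything the paper records, but is consistent with its remarks after \eqref{eq:F_2} that the $y$-shifts and such residuals impact only at higher than second order.
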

\begin{proof}
	This result is obtained by noting that the instantaneous forward rate is given by $f^T(y_t,t)$ where
	$$f^T(y,t)= -\frac{\partial}{\partial T}\ln F^T(y,t)$$
and making use of \eqref{eq:F^T}.
\end{proof}

\subsection{Option Pricing}\label{subsec:OptionPricing}

We assume below that, in option pricing, the interest rate that is used as the underlying is the same one that is used in the discounting of cash flows. If, particularly in the case of LIBOR, there is a spread over the risk-free rate, this situation can effectively be managed by calibrating the model to the risk-free rate and subtracting the (assumed deterministic) spread off of the strike (or coupon rate).

\paragraph{Compounded Rate Caplet Pricing}\label{CompoundedCapletPricing}

RFR cap and caplet prices are obtained to leading order from the following theorem.

\begin{theorem}[RFR Caplet Price]\label{CapletPrice}
	Consider a caplet based on the compounded risk-free rate over a payment period $[T_1,T_2]$ and a payoff with strike $K$ at time $T_2$ of
	\begin{align}\label{eq:P_caplet}
		\left[e^{\int_{T_1}^{T_2}r(y_t,t)dt}-1-K\delta(T_1,T_2)\right]^+
		&=\left[D(T_1,T_2)^{-1}e^{z_2-z_1}-\kappa^{-1}\right]^+\cr
		&=:	P_{\text{caplet}}(z_1,z_2)
	\end{align}
	where $z_{T_1}=z_1$, $z_{T_2}=z_2$ and $\kappa=(1+K\delta(T_1,T_2))^{-1}$, with $\delta(\cdot,\cdot)$ providing the relevant year fraction. Using the above-defined notation and defining the critical value of $z_2-z_1$ at which the option comes into the money as
	\begin{equation}
		\Delta z^*=\ln\left(\kappa^{-1}D(T_1,T_2)\right),\label{eq:Delta_z^*}
	\end{equation}
	the caplet PV will be given asymptotically with relative errors $=\Oo(\epsilon^2)$ by
	\begin{align}
		PV_{\text{Caplet}}&\sim D(0,T_1)\left(1-\int_0^{T_1}\tilde{\mG}_1(0,t_1,T_1)  dt_1\right)\Phi(d_1(y,z-z_1, 0))\Bigg|_{y=0,z=z_1}\cr
		&\quad-\kappa^{-1}D(0,T_2)\left(1 -\int_0^{T_2}\tilde{\mG}_1(0,t_1,T_2)e^{-\theta(y,0) H(T_1-t_1)}dt_1 -B^*(T_1,T_2)\Sigma_{rz}(0,T_1)\right)\cr
		&\hspace{290pt}\Phi(d_2(y,z-z_1, 0))\Bigg|_{y=0,z=z_1} \qquad\cr
		&\quad-\kappa^{-1}D(0,T_2)\sqrt{B^{*2}(T_1,T_2)\Sigma_{rr}(0,T_1) +\Sigma_{zz}(T_1,T_2)} N(d_2(0,0,0)), \label{eq:PV_caplet}
	\end{align}
	with $\Phi(\cdot)$ a cumulative normal distribution function, $N(\cdot)$ the corresponding density and $H(\cdot)$ the Heaviside step function, where we define
	\begin{align}
		d_1(y,w,t)&:=\frac{\theta(y,t) +w -\Delta z^* +\frac12(B^{*2}(T_1,T_2) \Sigma_{rr}(t,T_1) +\Sigma_{zz}(T_1,T_2))} {\sqrt{B^{*2}(T_1,T_2)\Sigma_{rr}(t,T_1) +\Sigma_{zz}(T_1,T_2)}}\label{eq:d_1}\\
		d_2(y,w,t)&:=d_1(y,w,t) -\sqrt{B^{*2}(T_1,T_2)\Sigma_{rr}(t,T_1) +\Sigma_{zz}(T_1,T_2)},\label{eq:d_2}\\
		\theta(y,t)&:=B^*(T_1,T_2)(\phi_r(t,T_1)y-\Sigma_{rz}(t,T_1)+\Sigma_{rz}(0,T_1)) +\textstyle\frac12B^{*2}(T_1,T_2) \phi_r^2(t,T_1)\Sigma_{rr}(0,t), \label{eq:theta}
	\end{align}
	and $\tilde{\mG}_1(t,t_1,T_2)$ is defined analogously to $\mG_1(t,t_1,T_2)$ above except in that
	we re-define
	\begin{align}
		\Delta y(t,t_1)&= \frac{\phi_r(T_1\wedge t_1,T_1\vee t_1) \Sigma_{rr}(t,T_1\wedge t_1)} {\phi_r(t,T_1)},\\
		\Delta z(t,t_1)&= \left(\Sigma_{rz}(T_1,t_1) +B^+(T_1,t_1,T_2)\Sigma_{rr}(T_1,t_1)\right) \mathbbm{1}_{t_1 > T_1}.
	\end{align}
\end{theorem}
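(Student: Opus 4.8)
Since the payoff \eqref{eq:P_caplet} involves the state variable $z$ at the two distinct times $T_1$ and $T_2$, the plan is to exploit the semigroup (Chapman--Kolmogorov) structure of the kernel of Theorem~\ref{PricingKernel} and price in two propagation steps. Writing $G$ for that kernel truncated after $G_1$, i.e.\ $G=D(t,v)e^{-\mu^*(y,t,v)}(G_0+G_1)$, I would set $PV_{\text{Caplet}}=\iint_{\R^2}G(0,0,0;y_1,z_1,T_1)\,V(y_1,z_1,T_1)\,dy_1\,dz_1$, with the inner value $V(y_1,z_1,T_1)=\iint_{\R^2}G(y_1,z_1,T_1;\eta,\zeta,T_2)\,P_{\text{caplet}}(z_1,\zeta)\,d\eta\,d\zeta$; the composition of two such $\Oo(\epsilon)$-accurate kernels stays within the stated error, the discarded $G_1\times G_1$ and $G_2$ pieces being of the next order.

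For the inner integral over $[T_1,T_2]$ the payoff $P_{\text{caplet}}(z_1,\zeta)=[D(T_1,T_2)^{-1}e^{\zeta-z_1}-\kappa^{-1}]^+$ is a call on an exponential of $\zeta$ and $G_0$ of \eqref{eq:G_0} is bivariate Gaussian in $(\eta,\zeta)$; marginalising out $\eta$ leaves a univariate Gaussian of variance $\Sigma_{zz}(T_1,T_2)$, and the Black-type identity $\int N(\zeta;m,s^2)[ae^{\zeta}-b]^+\,d\zeta=ae^{m+s^2/2}\Phi(\cdot)-b\Phi(\cdot)$, combined with the prefactor $D(T_1,T_2)e^{-\mu^*(y_1,T_1,T_2)}$ and the cancellations $D(T_1,T_2)\,D(T_1,T_2)^{-1}=1$, $e^{-\mu^*}e^{\mu^*}=1$, yields the leading value $\Phi(d_1(y_1,0,T_1))-\kappa^{-1}D(T_1,T_2)e^{-\mu^*(y_1,T_1,T_2)}\Phi(d_2(y_1,0,T_1))$, which is $z_1$-free by the $z$-translation invariance of the kernel. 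For the $G_1$ contribution I would push the operators $\int_{T_1}^{T_2}\mG_1(T_1,t_1,T_2)\,dt_1$ and $\mQ(T_1,T_2)$ of \eqref{eq:G_1}--\eqref{eq:Q} through the payoff integral: integrating by parts in $\zeta$ turns $(\partial/\partial z-1)G_0$ acting on the ramp into $G_0$ acting on $\kappa^{-1}H(\zeta-z_1-\Delta z^*)$ (on the in-the-money set the ramp's $\zeta$-derivative minus the ramp collapses to the constant $\kappa^{-1}$), producing $\Phi(d_2)$-type contributions; the $\Sigma_{zz}(T_1,T_2)\,\partial/\partial z$ piece of $\mQ$ additionally hits the kink, producing a term proportional to $\sqrt{\Sigma_{zz}(T_1,T_2)}\,N(d_2(y_1,0,T_1))$; and the shift operators $\mM^\pm$ merely translate the arguments of the $\Phi$'s and of $\mu^*$.

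The outer integral over $[0,T_1]$ then feeds this ($z_1$-free) inner value into the convolution against $G(0,0,0;y_1,z_1,T_1)=D(0,T_1)(G_0+G_1)$. Integrating out $z_1$ leaves the $y_1$-marginal of $G_0$: a Gaussian of variance $\Sigma_{rr}(0,T_1)$ and mean $-\Sigma_{rz}(0,T_1)$ (the forward-measure convexity shift, which combines with the $\Sigma_{rz}(0,T_1)$ sitting inside $\theta$ so that the $d_{1,2}(y_1,0,T_1)$ of the first step pass to $d_{1,2}(0,0,0)$). Evaluating with the standard identities $\int N\,\Phi=\Phi$ (the variance being augmented by $B^{*2}(T_1,T_2)\Sigma_{rr}(0,T_1)$, exactly the term appearing under the square roots in $d_1,d_2$), $\int N\,e^{c\,\cdot}\Phi=e^{(\cdot)}\Phi$ (the residual exponential cancelling $e^{-\mu^*}$, consistently with $D(0,T_1)D(T_1,T_2)=D(0,T_2)$), and a like integral of $e^{-\mu^*(y_1,\cdot)}N(d_2(y_1,0,T_1))$ returning a constant multiple of $N(d_2(0,0,0))$, produces the three terms of \eqref{eq:PV_caplet}. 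The $G_1$ corrections from both legs, once their shift amounts are re-expressed as the \emph{net} effect over $[0,T_1]$ and over $[T_1,T_2]$, act on the leading $\Phi(d_1)$ and $\Phi(d_2)$ and collect into the $-\int_0^{T_i}\tilde{\mG}_1(0,t_1,T_i)\cdots\,dt_1$ operators multiplying them; the $\mQ$-generated cross-terms of the two legs supply the additive $-B^*(T_1,T_2)\Sigma_{rz}(0,T_1)$ correction inside the second bracket and, together with the kink term above, the $\sqrt{B^{*2}(T_1,T_2)\Sigma_{rr}(0,T_1)+\Sigma_{zz}(T_1,T_2)}\,N(d_2(0,0,0))$ term.

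The main obstacle is the bookkeeping of the shift operators $\mM^\pm(t,t_1)$ buried inside $\mG_1$ as they pass through the two-step propagation: a shift generated at an integration time $t_1$ must be routed to the first leg --- where it acts on variables living at $T_1$, with the mean-reversion factor broken at $T_1$ --- when $t_1<T_1$, and to the second leg --- where only the $z$-shift survives --- when $t_1>T_1$. This is precisely what forces the $T_1\wedge t_1$, $T_1\vee t_1$ splitting and the indicator $\mathbbm{1}_{t_1>T_1}$ in the re-definitions of $\Delta y$ and $\Delta z$, and the factor $e^{-\theta(y,0)H(T_1-t_1)}$ attached to $\tilde{\mG}_1$ in the $\Phi(d_2)$-term (it records whether the shift has yet been commuted past the stochastic-discounting factor $e^{-\mu^*}$). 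Alongside this one must check that every cross-term generated by $\mQ$, by the $\Sigma_{rz}$-coupling of $G_0(0,0,0;y_1,z_1,T_1)$ and by commuting the shifts past $e^{-\mu^*}$ is kept consistently, while the pieces dropped --- including the $y$-shifts inside the second-order $\mG_1\mG_1$ terms noted after \eqref{eq:F_2} --- are genuinely of higher order relative to the price. It is this accounting, rather than any single integral, where the real work lies.
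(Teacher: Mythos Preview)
Your plan matches the paper's own proof in Appendix~\ref{sec:CapletPricingProof}: a two-leg Chapman--Kolmogorov propagation with the truncated kernel $G_0+G_1$, a Black-type evaluation of the inner integral over $[T_1,T_2]$ (yielding $V^{(0,0)}$ and $V^{(1,0)}$ at $T_1$, the latter picking up the $\Sigma_{zz}\,\partial/\partial z$ kink term), followed by the outer Gaussian convolution over $[0,T_1]$ and the routing of the $\mM^\pm$ shifts through the two legs to produce the $\tilde{\mG}_1$ operators and the $e^{-\theta H(T_1-t_1)}$ factor. The only point the paper makes explicit that you leave implicit is the identification and discarding of the subdominant cross-leg shift correction $\Delta y_2(t,t_1)$ (their \eqref{eq:Delta_y_2}), justified there by consistency with deep in-the-money forward prices; this sits inside what you call ``the pieces dropped'' but is worth flagging separately since it is not obviously $\Oo(\epsilon^2)$ on dimensional grounds alone.
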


\begin{proof}
	The proof of this result is set out in Appendix \ref{sec:CapletPricingProof}. Cap prices are of course obtained as an algebraic sum of caplet prices.
\end{proof}

\paragraph{LIBOR Caplet Pricing}\label{LiborCapletPricing}

LIBOR or term-rate cap and caplet prices are obtained similarly.
\begin{theorem}[LIBOR Caplet Price]\label{LIBORCapletPrice}
Consider a caplet based on the term rate over a payment period $[T_1,T_2]$ and a payoff with strike $K$ at time $T_2$ of
\begin{align}\label{eq:P_LIBOR_caplet}
	P_{\text{LIBOR caplet}}(y_1)&:=\left[1/F^{T_2}(y_1,T_1)-\kappa^{-1}\right]^+
\end{align}
where $y_{T_1}=y_1$. This gives rise to a $T_1$-value of
$$\label{eq:V(y,T_1)}
	V_{\text{LIBOR caplet}}(y,T_1)=\Big[1-\kappa^{-1}F^{T_2}(y,T_1)\Big]^+.$$
Using the above-defined notation, the caplet PV will be given asymptotically with relative errors $=\Oo(\epsilon^2)$ by
\begin{align}
	PV_{\text{LIBOR Caplet}}&\sim D(0,T_1)\left(1-\int_0^{T_1}\tilde{\mG}_1(0,t_1,T_1)  dt_1\right)\Phi(\overline{d}_1(y,0))\Bigg|_{y=0}\cr
	&\quad-\kappa^{-1}D(0,T_2)\left(1-\int_0^{T_2}\tilde{\mG}_1(0,t_1,T_2)e^{-\theta(y,0) H(T_1-t_1)}dt_1 -B^*(T_1,T_2)\Sigma_{rz}(0,T_1)\right)\cr
	&\hspace{320pt}\Phi(\overline{d}_2(y,0))\Bigg|_{y=0}\qquad\cr
	&\quad-\kappa^{-1}D(0,T_2)\sqrt{B^{*2}(T_1,T_2)\Sigma_{rr}(0,T_1)} N(\overline{d}_2(0,0)), \label{eq:PV_LIBOR_caplet}
\end{align}
where we define  additionally
\begin{align}
	\overline{d}_1(y,t)&:=\frac{\theta(y,t) -\Delta z^* +\frac12B^{*2}(T_1,T_2) \Sigma_{rr}(t,T_1)} {\sqrt{B^{*2}(T_1,T_2)\Sigma_{rr}(t,T_1)}}\label{eq:bar_d_1}\\
	\overline{d}_2(y,t)&:=\overline{d}_1(y,t) -\sqrt{B^{*2}(T_1,T_2)\Sigma_{rr}(t,T_1)}. \label{eq:bar_d_2}
\end{align}
\end{theorem}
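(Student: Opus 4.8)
The plan is to reduce the LIBOR caplet valuation to an application of the pricing kernel from Theorem \ref{PricingKernel}, evaluated at the intermediate date $T_1$, exactly in parallel with the compounded-rate proof of Theorem \ref{CapletPrice} in Appendix \ref{sec:CapletPricingProof}, so most of the work is in identifying the right simplifications. First I would write the present value as
\begin{equation*}
PV_{\text{LIBOR Caplet}} = \iint_{\R^2} G(0,0,0;\eta,\zeta,T_1)\,V_{\text{LIBOR caplet}}(\eta,T_1)\,d\eta\,d\zeta,
\end{equation*}
and substitute the $T_1$-value $V_{\text{LIBOR caplet}}(y,T_1)=[1-\kappa^{-1}F^{T_2}(y,T_1)]^+$ together with the first-order bond-price representation $F^{T_2}(y,T_1)\sim D(T_1,T_2)e^{-\mu^*(y,T_1,T_2)}(1-\mF_1(T_1,T_2))$ from Theorem \ref{Calibration}. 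Since the payoff is $z$-independent, the $\zeta$-integration of $G_0$ just collapses the second Gaussian argument, leaving an effective one-dimensional Gaussian in $\eta$ with variance $\Sigma_{rr}(0,T_1)$; the exponent $\mu^*(y,T_1,T_2)$ contributes the drift term $\theta(y,t)$ after the standard completion-of-the-square, and the option becomes a Bachelier-type expectation whose exercise boundary is precisely $\Delta z^*$ as in \eqref{eq:Delta_z^*}. This produces the leading $\Phi(\overline d_1)$ and $\kappa^{-1}D(0,T_2)\Phi(\overline d_2)$ terms together with the $N(\overline d_2)$ term, where $\overline d_1,\overline d_2$ are obtained from $d_1,d_2$ of \eqref{eq:d_1}--\eqref{eq:d_2} by dropping $\Sigma_{zz}(T_1,T_2)$ and the $w$-argument — which is exactly the stated form \eqref{eq:bar_d_1}--\eqref{eq:bar_d_2}, reflecting that a term-rate caplet is a compounded-rate caplet with the post-$T_1$ accrual variance removed.

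The $\Oo(\epsilon)$ correction terms come from two sources that must be combined. The first is the $G_1$ piece of the kernel acting on the payoff: by \eqref{eq:G_1} this is $(\int_0^{T_1}\mG_1(0,t_1,T_1)\,dt_1 - \mQ(0,T_1))(\partial_z-1)G_0$ applied to $V_{\text{LIBOR caplet}}$, and since the payoff does not depend on $z$, only the "$-1$" part of $(\partial_z-1)$ survives after integration, up to the boundary contributions from $\mQ$; integrating by parts in $\eta$ (and $\zeta$, trivially) converts the shift operators $\mM^\pm$ inside $\mG_1$ into the re-defined increments $\Delta y,\Delta z$ appearing in $\tilde\mG_1$, and the $H(T_1-t_1)$ factor with $e^{-\theta(y,0)H(T_1-t_1)}$ encodes whether the shift date $t_1$ precedes or follows $T_1$ (for $t_1>T_1$ the increment acts on the already-evaluated bond factor $e^{-\mu^*}$, generating the $e^{-\theta}$ weighting). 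The second source is the $\mF_1(T_1,T_2)$ correction already present inside $F^{T_2}(y,T_1)$, which contributes the $-B^*(T_1,T_2)\Sigma_{rz}(0,T_1)$ adjustment and, after the Gaussian integration, folds into the $\tilde\mG_1$ integral against the second (strike) leg. Collecting the two and using the identity \eqref{eq:B^+_identity} gives the parenthetical factors $(1-\int_0^{T_1}\tilde\mG_1(0,t_1,T_1)dt_1)$ and $(1-\int_0^{T_2}\tilde\mG_1(0,t_1,T_2)e^{-\theta(y,0)H(T_1-t_1)}dt_1 - B^*(T_1,T_2)\Sigma_{rz}(0,T_1))$ in \eqref{eq:PV_LIBOR_caplet}.

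The main obstacle I expect is the careful bookkeeping of the shift operators across the exercise/evaluation boundary: the operators $\mM^\pm(0,t_1)$ in $\mG_1$ must be tracked through both the Gaussian integration over $\eta$ at $T_1$ \emph{and} through the $T_1$-to-$T_2$ bond factor $D(T_1,T_2)e^{-\mu^*(y,T_1,T_2)}$, and showing that this produces exactly the two re-defined increments $\Delta y(t,t_1)=\phi_r(T_1\wedge t_1,T_1\vee t_1)\Sigma_{rr}(t,T_1\wedge t_1)/\phi_r(t,T_1)$ and $\Delta z(t,t_1)=(\Sigma_{rz}(T_1,t_1)+B^+(T_1,t_1,T_2)\Sigma_{rr}(T_1,t_1))\mathbbm{1}_{t_1>T_1}$ requires the composition identities for $\phi_r$, $\psi_r$, $\Sigma_{rr}$ and $B^+$ across the split point. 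Once the compounded-rate result \eqref{eq:PV_caplet} is in hand, however, the LIBOR case is essentially its specialization: set the post-$T_1$ running variable to its expectation (i.e. replace $z_2-z_1$ by its conditional mean, killing the $\Sigma_{zz}(T_1,T_2)$ variance and the $N(d_2)$-accrual term's $\Sigma_{zz}$ part), and the formula \eqref{eq:PV_LIBOR_caplet} follows, which is the cleanest way to present it and the route I would take to minimize redundant calculation.
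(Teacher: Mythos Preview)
Your final recommendation --- to obtain the LIBOR formula by specializing the compounded-rate result, suppressing the post-$T_1$ accrual variance --- is exactly the paper's proof, which consists of a single sentence: Theorem~\ref{LIBORCapletPrice} is the special case of Theorem~\ref{CapletPrice} obtained by setting $\sigma(t)=0$ for $t\in[T_1,T_2]$, so that $\Sigma_{zz}(T_1,T_2)$ and the $\Sigma_{rr}(T_1,\cdot)$-dependent $z$-shifts vanish and $d_j$ collapses to $\overline d_j$. The detailed direct computation you sketch in the first two paragraphs is correct in outline but unnecessary once the compounded-rate result is in hand.
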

\begin{proof}
This result can be understood as a special case of Theorem \ref{CapletPrice} when a zero value is assigned to the volatility $\sigma(t)$ for $t\in[T_1, T_2]$.
\end{proof}

\paragraph{Swaption Pricing}\label{SwaptionPricing}

Swaption prices are obtained similarly.
\begin{theorem}[Swaption Price]\label{SwaptionPrice}
	Consider a payer swaption based on a LIBOR or compounded-rate swap with payment periods $[T_{i-1},T_i]$ for $i=1, 2, \ldots,n$ and a fixed coupon $K$ with payoff at time $T_0$ given by the swap value at time $T_0$ if this is in the money.

	Using the above-defined notation, the swaption PV will be given asymptotically with relative errors $=\Oo(\epsilon^2)$ by
	\begin{align}
		PV_{\text{Swaption}}&\sim D(0,T_0)\left(1-\int_0^{T_0}\tilde{\mG}_1(0,t_1,T_0)  dt_1\right)\Phi(d^{(0)}(y))\Bigg|_{y=0}\cr
		&\quad-D(0,T_n)\left(1-\int_0^{T_n}\tilde{\mG}_1(0,t_1,T_n)e^{-B^*(T_0,T_n)\phi_r(0,T_0)y H(T_0-t_1)}dt_1 -B^*(T_0,T_n)\Sigma_{rz}(0,T_0)\right)\cr &\hspace{360pt}\Phi(d^{(n)}(y))\Bigg|_{y=0}\qquad\cr
		&\hspace{75pt}+\sqrt{B^{*2}(T_0,T_n)\Sigma_{rr}(0,T_0)} N(d^{(n)}(0)) \cr
		&\quad-K\sum_{i=1}^n\delta(T_{i-1},T_i)D(0,T_i)\cr
		&\hspace{55pt}\Bigg(\left(1-\int_0^{T_i}\tilde{\mG}_1(0,t_1,T_i)e^{-B^*(T_0,T_i) \phi_r(0,T_0)y H(T_0-t_1)}dt_1 -B^*(T_0,T_i)\Sigma_{rz}(0,T_0)\right)\cr &\hspace{360pt}\Phi(d^{(i)}(y))\Bigg|_{y=0}\qquad\cr
		&\hspace{75pt}+\sqrt{B^{*2}(T_0,T_i)\Sigma_{rr}(0,T_0)} N(d^{(i)}(0))\Bigg), \label{eq:PV_Swaption}
	\end{align}
	where we define additionally for $i=0, 1, \ldots, n$
	\begin{align}
		d^{(i)}(y)&:=\frac{\phi_r(0,T_0)y -y_c-\Sigma_{rz}(0,T_0)} {\sqrt{\Sigma_{rr}(0,T_0)}} -\sqrt{B^{*2}(T_0,T_i)\Sigma_{rr}(0,T_0)},\label{eq:bar_d^(i)}
	\end{align}
	where $y_c$ is the value of $y_{T_0}$ at which the swap comes into the money.
\end{theorem}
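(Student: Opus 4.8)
The plan is to reduce the payer swaption to a portfolio of digital-type payoffs in the single state variable $y_{T_0}$ and then to apply the pricing kernel of Theorem~\ref{PricingKernel} exactly as in the caplet proofs of Appendix~\ref{sec:CapletPricingProof}. First I would write the time-$T_0$ value of the underlying swap (unit notional) by means of the bond-price formula~\eqref{eq:F^T}: both for a LIBOR leg and for a compounded-rate leg the floating leg telescopes exactly to $1-F^{T_n}(y,T_0)$ (with $F^{T_n}$ the bond price of Theorem~\ref{Calibration}) and the fixed leg to $K\sum_{i=1}^n\delta(T_{i-1},T_i)F^{T_i}(y,T_0)$, so that
\[
 V_{\text{swap}}(y,T_0)=1-F^{T_n}(y,T_0)-K\sum_{i=1}^n\delta(T_{i-1},T_i)F^{T_i}(y,T_0).
\]
At leading order each $F^{T_i}(y,T_0)\sim D(T_0,T_i)e^{-\mu^*(y,T_0,T_i)}$, cf.~\eqref{eq:mu*}, is a strictly decreasing exponential-affine function of $y$ with $B^*(T_0,T_i)$ increasing in $i$; hence $V_{\text{swap}}(\cdot,T_0)$ is monotone increasing and, for $\epsilon$ small enough, there is a unique break-even level $y_c$, defined implicitly by $V_{\text{swap}}(y_c,T_0)=0$, with $\{V_{\text{swap}}>0\}=\{y>y_c\}$. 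The swaption payoff is therefore $P(\eta,\zeta)=V_{\text{swap}}(\eta,T_0)\,\mathbbm{1}_{\eta>y_c}$, independent of $\zeta$.

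Next I would substitute $G(y,z,0;\eta,\zeta,T_0)=D(0,T_0)e^{-\mu^*(y,0,T_0)}\sum_n G_n$ from~\eqref{eq:G_compounded} into the convolution~\eqref{eq:G-def} and perform the $\zeta$-integration first. Since $P$ is independent of $\zeta$, every operator $(\partial/\partial z-1)$ appearing in $G_1$ reduces to $-1$ up to boundary terms in $\zeta$ that vanish, the bivariate $G_0$ marginalises to a univariate Gaussian in $\eta$ of variance $\Sigma_{rr}(0,T_0)$, and the $\mathcal{Q}$-operator survives only through its $\partial/\partial y$ piece. Then, using linearity of the convolution, I would split $P$ into the ``par'' term $\mathbbm{1}_{\eta>y_c}$ and the bond terms $F^{T_i}(\eta,T_0)\,\mathbbm{1}_{\eta>y_c}$ and treat each separately. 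The par term contributes at leading order the Gaussian tail integral $D(0,T_0)\,\Phi(d^{(0)}(y))\big|_{y=0}$; each bond term, being $e^{-B^*(T_0,T_i)\eta}$ up to deterministic constants, is evaluated by completing the square, the discount and exponential prefactors cancelling to leave $D(0,T_i)\,\Phi(d^{(i)}(y))\big|_{y=0}$ with $d^{(i)}$ given by~\eqref{eq:bar_d^(i)} and $y_c$ entering its argument. Assembling the par, floating-leg ($T_n$) and fixed-leg ($T_i$, $i=1,\dots,n$) pieces with the signs dictated by the payer convention reproduces~\eqref{eq:PV_Swaption} at order $\Oo(1)$.

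The $\Oo(\epsilon)$ corrections are where the real work lies, and I expect this step to be the main obstacle, because it requires combining two distinct first-order contributions: the kernel correction $G_1$ of~\eqref{eq:G_1}, whose operator carries the shift operators $\mathcal{M}^\pm$ acting over the option life $[0,T_0]$, and the factor $-\mathcal{F}_1(T_0,T_i)$ multiplying each $F^{T_i}$ in~\eqref{eq:F^T}, which contains $-\mu^*(y,T_0,T_i)$ together with shift operators over the bond life $[T_0,T_i]$. The $y$-shifts of $G_1$ act on the exponential-affine bond factors and generate the multipliers $e^{-B^*(T_0,T_i)\phi_r(0,T_0)y\,H(T_0-t_1)}$, the Heaviside reflecting whether the dummy variable $t_1$ precedes or follows $T_0$; merging the $[0,T_0]$ and $[T_0,T_i]$ integrations and tracking the induced shift arguments is precisely what forces the redefinition of $\Delta y,\Delta z$ that yields $\tilde{\mG}_1(0,t_1,T_i)$ (that of Theorem~\ref{CapletPrice} with $T_1,T_2$ replaced by $T_0,T_i$), so that the first-order-corrected prefactor of the $T_i$ bond term becomes $1-\int_0^{T_i}\tilde{\mG}_1(0,t_1,T_i)e^{-B^*(T_0,T_i)\phi_r(0,T_0)yH(T_0-t_1)}dt_1-B^*(T_0,T_i)\Sigma_{rz}(0,T_0)$, the trailing constant coming from the $y$-independent part of $\mu^*(y,T_0,T_i)$. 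The linear-in-$y$ part of $-\mathcal{F}_1$, i.e.~the $B^*(T_0,T_i)\eta$ term of $\mu^*$, produces Gaussian integrals of the form $\int_{y_c}^{\infty}\eta\,e^{-B^*(T_0,T_i)\eta}N(\eta)\,d\eta$ which, on integration by parts, split into a $\Phi$-piece absorbed into the prefactor above and a density piece $\sqrt{B^{*2}(T_0,T_i)\Sigma_{rr}(0,T_0)}\,N(d^{(i)}(0))$; the par term carries no such density piece because its payoff is $y$-independent. I would then check that the $\partial/\partial y$ piece of $\mathcal{Q}$, the $R_2^*$ terms, the $G_2$ contributions, and the $y$-shifts in the double integral of~\eqref{eq:F_2} all contribute only at relative order $\epsilon^2$ (using $\|\gamma(\cdot)y^*(\cdot)\|=\Oo(\sqrt\epsilon)$ and the remark following~\eqref{eq:F_2}), which secures the stated error estimate. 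Finally, as a consistency check, the case $n=1$ collapses the sum to a single bond term and recovers Theorem~\ref{LIBORCapletPrice}.
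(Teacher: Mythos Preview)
Your proposal is correct and follows essentially the same approach as the paper: decompose the swaption payoff as $1-F^{T_n}(y,T_0)-K\sum_i\delta(T_{i-1},T_i)F^{T_i}(y,T_0)$, note that this places it in direct analogy with the LIBOR caplet payoff~\eqref{eq:P_LIBOR_caplet}, and carry through the steps of Appendix~\ref{sec:CapletPricingProof} term-by-term. The paper's own proof is in fact much terser than your outline---it simply records the payoff decomposition, invokes the telescoping identity for the floating leg (valid for both term-rate and compounded-rate swaps), and defers all computational detail to the caplet proof---so your account of where the $\tilde{\mG}_1$ redefinition, the Heaviside factor, and the density pieces arise is a genuine elaboration rather than a deviation.
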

\begin{proof}
	The proof of this result is analogous to that provided in Appendix \ref{sec:CapletPricingProof}. This follows from the similarity of the swaption payoff:
	\begin{equation*}
		P_{Swaption}(y,T_0) = 1-   F^{T_{n}}(y,T_0) - K \sum_{i=1}^n \delta(T_{i-1}, T_i)F^{T_{i}}(y,T_0)
	\end{equation*}
	with Eq. \eqref{eq:P_LIBOR_caplet}.
	A key observation is that the value of a risk-free interest rate payment stream, starting at $T_0$ with final payment at $T_n$ and discounted at the risk-free interest rate is $PV = F^{T_0}(y,t)-F^{T_n}(y,t)$, i.e. the difference between the value of a unit payment at time $T_0$ and one at time $T_n$. We note further that this observation is equally true of compounded-rate and term-rate swaptions, since the term-rate payment is by definition the expected value of a compounded-rate payment, observed on its initial fixing date.
\end{proof}

\subsection{Implied Volatility Formulae} \label{subsec:ImpliedVolatility}

We look to translate the formulae set out in \eqref{eq:PV_caplet} and \eqref{eq:PV_LIBOR_caplet} above into expressions for effective/implied Hull-White term variances. This has the advantage of making it more intuitively obvious how the smile and skew impact on the caplet pricing as a function of moneyness. Indeed, one of the first things practitioners typically want to do with option prices is to translate them into effective/implied volatility equivalents. A second advantage is that if, rather than using our asymptotic formulae directly, we use the Hull-White formulae with our best estimate of the effective variance substituted for the Hull-White variance, we naturally obtain better extrapolated behaviour far from the money, with arbitrage avoided.%
\footnote{The quadratic expressions \eqref{eq:v} and \eqref{eq:v_bar} obtained below may arguably give rise for values of $\epsilon$ of order unity to effective variance specifications which are negative. But such circumstances are of no practical relevance and would in any event compromise the asymptotic approach entirely, not just the implied volatility specification.}

We note in passing that one of the main reasons for the popularity of the SABR model of \cite{SABR} was precisely that the asymptotic representation of option prices was explicitly as a formula for the implied Black-Scholes volatility. The approach we take is along the lines of that made use of by \cite{AlosHeston2}.

\paragraph{Effective Compounded Rates Variance}
Starting with the expression \eqref{eq:PV_caplet} for the compounded caplet price, we look to re-express this as
\begin{equation}\label{eq:PV_caplet2}
	PV_{\text{Caplet}}\sim D(0,T_1)\Phi(d_{(1)}(K,T_1,T_2)) -\kappa^{-1}D(0,T_2) \Phi(d_{(2)}(K,T_1,T_2))
\end{equation}
with relative errors $=\Oo(\epsilon^2)$, where
\begin{align}
	d_{(1)}(K,T_1,T_2)) &:=\frac{-\Delta z^* +\frac12\left(V_C +v(K, T_1, T_2)\right)} {\sqrt{V_C +v(K, T_1, T_2)}}\label{eq:d_(1)}\\
	d_{(2)}(K,T_1,T_2)) &:=d_{(1)}(K, T_1, T_2) -\sqrt{V_C + v(K, T_1, T_2)},\label{eq:d_(2)}
\end{align}
for some variance adjustment function $v(K, T_1, T_2)$ defined against a baseline compounded-rate term variance
\begin{equation}\label{eq:V_C}
	V_C=B^{*2}(T_1,T_2)\Sigma_{rr}(0,T_1) +\Sigma_{zz}(T_1,T_2),
\end{equation}
so that the effective variance is
\begin{equation} \label{eq:V_eff}
	V_{\text{effective}} = V_C + v(K, T_1, T_2).
\end{equation}

To determine the appropriate functional form for $v(K, T_1, T_2)$, we expand \eqref{eq:PV_caplet} and \eqref{eq:PV_caplet2} as Taylor expansions and equate terms. To that end we note expanding \eqref{eq:PV_caplet2} to first order in terms of the small parameter $v(K, T_1, T_2)$ that
\begin{align}
	PV_{\text{Caplet}}&\sim D(0,T_1)\Phi(d_{(1)}) -\kappa^{-1}D(0,T_2)\left(\Phi(d_{(2)})\ -\frac{v(K, T_1, T_2) }{2\sqrt{V_C}}N(d_{(2)})\right),
\end{align}
with $d_{(j)}$ given by $d_j(0,0,0)$, while the expansion of \eqref{eq:PV_caplet} gives rise with $\Oo(\epsilon^2)$ relative error to
\begin{align*}
	&D(0,T_1)\Phi(d_{(1)}) -\kappa^{-1}D(0,T_2)\Phi(d_{(2)}) \cr&\quad+\left(C_1(T_1,T_2) -C_2(T_1,T_2)\, d_{(2)} +C_3(T_1,T_2) \left({d_{(2)}}{}^2 -1\right)\right) \kappa^{-1}D(0,T_2) N(d_{(2)})
\end{align*}
where
\begin{align}
	C_1(T_1,T_2)&:=\int_{T_1}^{T_2}\left(\psi_r(0,t_1)\cosh Y^*(t_1,T_2) -\psi_r(T_1,t_1)\right) \Psi_C(T_1,t_1,T_2)\,dt_1\\
	C_2(T_1,T_2)&:=\frac{1}{2}\int_{T_1}^{T_2} \psi_r(0,t_1) \sinh Y^*(t_1,T_2) \gamma(t_1)\Psi_C^2(T_1,t_1,T_2) \,dt_1\\
	C_3(T_1,T_2)&:=\frac{1}{3!}\int_{T_1}^{T_2} \psi_r(0,t_1) \cosh Y^*(t_1,T_2)\gamma^2(t_1)\Psi_C^3(T_1,t_1,T_2)\,dt_1,\label{eq:C_{3,2}}
\end{align}
with
\begin{align}
	\Psi_C(T_1,t_1,T_2)&:=V_C^{-\frac12}\left(B^*(T_1,T_2)\phi_r(T_1,t_1) \Sigma_{rr}(0,T_1) +\Sigma_{rz}(T_1,t_1) +B^+(T_1,t_1,T_2)\Sigma_{rr}(T_1,t_1) \right)
\end{align}
and $Y^*(\cdot,\cdot)$ defined by \eqref{eq:Y^*} above.

Equating terms in the two expansions, we conclude that we must have
\begin{align}\label{eq:v}
	v(K, T_1, T_2) &\sim 2V_C^{\frac12}\left(C_1(T_1,T_2) -C_2(T_1,T_2)\, d_{(2)} +C_3(T_1,T_2) \left({d_{(2)}}{}^2 -1\right)\right),
\end{align}
from which we deduce an effective compounding rate caplet term variance given by \eqref{eq:V_eff} with $\Oo(\epsilon^2)$ relative error. It is at this point evident how the linear term involving $C_2(T_1,T_2)$ gives rise to the skew and the quadratic term involving $C_3(T_1,T_2)$ to the smile. We suggest on this basis that taking the parameter $\epsilon$ to be given here by $\max\{|C_2(T_1,T_2)|,C_3(T_1,T_2)\}/V_C^{\frac12}$ would furnish a good guide as to the level of accuracy of the first order effective variance expansion, with relative errors being expected to be of magnitude $\epsilon^2$.

 \paragraph{Effective LIBOR Variance}
Starting instead with the expression \eqref{eq:PV_LIBOR_caplet} for the LIBOR or term-rate caplet price, we look to re-express this as
\begin{equation}\label{eq:PV_LIBOR_caplet2}
	PV_{\text{LIBOR Caplet}}\sim D(0,T_1)\Phi(\overline{d}_{(1)}(K,T_1,T_2)) -\kappa^{-1}D(0,T_2) \Phi(\overline{d}_{(2)}(K,T_1,T_2))
\end{equation}
with relative errors $=\Oo(\epsilon^2)$, where
\begin{align}
	\overline{d}_{(1)}(K,T_1,T_2)) &:=\frac{-\Delta z^* +\frac12\left(V_L + \overline{v}(K, T_1, T_2)\right)} {\sqrt{V_L + \overline{v}(K, T_1, T_2)}}\label{eq:bar_d_(1)}\\
	\overline{d}_{(2)}(K,T_1,T_2)) &:=\overline{d}_{(1)}(K, T_1, T_2) -\sqrt{V_L + \overline{v}(K, T_1, T_2)},\label{eq:bar_d_(2)}
\end{align}
for some variance adjustment function $\overline{v}(K, T_1, T_2)$ defined against a baseline LIBOR term variance
\begin{equation}\label{eq:V_L}
	V_L:=B^{*2}(T_1,T_2)\Sigma_{rr}(0,T_1),
\end{equation}
so that the effective variance is
\begin{equation}\label{eq:V_eff2}
	V_{\text{effective}} = V_L + \overline{v}(K, T_1, T_2)
\end{equation}
By a calculation analogous to the above, we conclude that we must have
\begin{align}\label{eq:v_bar}
	\overline{v}(K, T_1, T_2) &\sim 2V_L^{\frac12}\left(\overline{C}_1(T_1,T_2) -\overline{C}_2(T_1,T_2)\,  \overline{d}_{(2)} +\overline{C}_3(T_1,T_2) \left({\overline{d}_{(2)}}^2 -1\right)\right),
\end{align}
with $\overline{d}_{(j)}$ given by $\overline{d}_j(0,0)$ and
\begin{align}
	\overline{C}_1(T_1,T_2)&:=\int_{T_1}^{T_2} \left(\psi_r(0,t_1)\cosh Y^*(t_1,T_2) -\psi_r(T_1,t_1)\right)\Psi_L(T_1,t_1)\,dt_1,\\
	\overline{C}_2(T_1,T_2)&:=\frac{1}{2}\int_{T_1}^{T_2} \psi_r(0,t_1) \sinh Y^*(t_1,T_2)\gamma(t_1)\Psi_L^2(T_1,t_1)\,dt_1,\\
	\overline{C}_3(T_1,T_2)&:=\frac{1}{3!}\int_{T_1}^{T_2} \psi_r(0,t_1)\cosh Y^*(t_1,T_2)\gamma^2(t_1)\Psi_L^3(T_1,t_1)\,dt_1,
\end{align}
with
\begin{align}
	\Psi_L(T_1,t_1)&:=\phi_r(T_1,t_1) \sqrt{\Sigma_{rr}(0,T_1)}.
\end{align}

\paragraph{Effective Swaption Variance}

Starting instead with the expression \eqref{eq:PV_Swaption} for the swaption price, we look to re-express this as
\begin{align}	\label{eq:PV_Swaption2}
	PV_{\text{Swaption}}&\sim D(0,T_0)\Phi(\tilde{d}^{(0)}(K, \bold{T})) -D(0,T_n)\Phi(\tilde{d}^{(n)}(K, \bold{T}))-K\sum_{i=1}^n\delta(T_{i-1},T_i)D(0,T_i)\Phi(\tilde{d}^{(i)}(K, \bold{T})),
\end{align}
with relative errors $=\Oo(\epsilon^2)$, where
\begin{align}
	\tilde{d}^{(i)}(K,\bold{T})) &:=\frac{-y_c-\Sigma_{rz}(0,T_0)} {\sqrt{\Sigma_{rr}(0,T_0) + \tilde{v}(K, \bold{T})}}-B^*(T_0,T_i)\sqrt{\Sigma_{rr}(0,T_0)+\tilde{v}(K, \bold{T})}\label{eq:tilde_d^(i)}
\end{align}
for some variance adjustment function $\tilde{v}(K, \bold{T})$ with $\bold{T} := (T_0,T_1, \ldots, T_n)$, with the effective short rate term variance taken to be
\begin{equation}\label{eq:V_eff3}
	V_{\text{effective}} = \Sigma_{rr}(0,T_0) +\tilde{v}(K, \bold{T}).
\end{equation}
Defining $\tilde{d}_{(i)}$ to be the result obtained by setting $\tilde{v}(K, \bold{T})=0$ in \eqref{eq:tilde_d^(i)}, we can expand \eqref{eq:PV_Swaption2} as
\begin{align} \label{eq:PV_Swaption3}
	PV_{\text{Swaption}}&\sim D(0,T_0)\Phi\left(\tilde{d}_{(0)}\right) -D(0,T_n)\Phi\left(\tilde{d}_{(n)}\right)-K\sum_{i=1}^n\delta(T_{i-1},T_i)D(0,T_i)\Phi\left(\tilde{d}_{(i)}\right)\cr
	&\quad+\frac{\tilde{v}(K, \bold{T})}{2\sqrt{\Sigma_{rr}(0,T_0)}} \sum_{i=1}^n \left(\delta_{in} +K\delta(T_{i-1},T_i)\right) \left(1 +\Delta\tilde{d}_{(i)} \tilde{d}_{(n)} \right) B^*(T_0,T_i) D(0,T_i) N\left(\tilde{d}_{(n)}\right),\qquad
\end{align}
where we have made use of the fact that
\begin{equation}\label{eq:N}
	N\left(\tilde{d}_{(i)}\right) \sim\left(1 +\Delta\tilde{d}_{(i)} \tilde{d}_{(n)} \right) N\left(\tilde{d}_{(n)}\right)
\end{equation}
with
\begin{equation}\label{eq:Delta d_(i)}
	\Delta\tilde{d}_{(i)}:=\tilde{d}_{(n)}-\tilde{d}_{(i)}
\end{equation}
and $\delta_{in}$ the Kronecker delta. We observe that the sum in the second line of \eqref{eq:PV_Swaption3} can be written more compactly as
$$\left(A_n+B_n \tilde{d}_{(n)}\right) N\left(\tilde{d}_{(n)}\right)$$
Expanding \eqref{eq:PV_Swaption} as in the previous cases leads to the conclusion that the second line in \eqref{eq:PV_Swaption3} must be equated with
$$\sum_{i=1}^n \left(\overline{C}_1(T_0,T_i) -\overline{C}_2(T_0,T_i)\,  \tilde{d}_{(i)} +\overline{C}_3(T_0,T_i) \left(\tilde{d}_{(i)}^2 -1\right)\right) \left(\delta_{in} +K\delta(T_{i-1},T_i)\right) D(0,T_i) N\left(\tilde{d}_{(i)}\right).$$
This can be written asymptotically, making use of \eqref{eq:N}, as
\begin{align*}
	\sum_{i=1}^n &\left(\overline{C}_1(T_0,T_i) -\overline{C}_2(T_0,T_i)\, \left(\tilde{d}_{(n)}-\Delta\tilde{d}_{(i)}\right) +\overline{C}_3(T_0,T_i) \left(\left(\tilde{d}_{(n)}-\Delta\tilde{d}_{(i)}\right)^2 -1\right)\right)\cr
	&\hspace{120pt}\left(\delta_{in} +K\delta(T_{i-1},T_i)\right)\left(1 +\Delta\tilde{d}_{(i)} \tilde{d}_{(n)} \right) D(0,T_i)  N\left(\tilde{d}_{(n)}\right),
\end{align*}
which we write more compactly as
$$\left(D_n +E_n\tilde{d}_{(n)} +F_n\left(\tilde{d}_{(n)}^2- 1\right)\right) N\left(\tilde{d}_{(n)}\right).$$
Asymptotic matching then requires that $\tilde{v}(K, \bold{T})$ be chosen to satisfy
\begin{equation}\label{eq:tilde_v}
	\tilde{v}(K, \bold{T})\sim\frac{2\left(D_n +E_n\tilde{d}_{(n)} +F_n\left(\tilde{d}_{(n)}^2- 1\right)\right)}{\left(A_n+B_n \tilde{d}_{(n)}\right) \sqrt{\Sigma_{rr}(0,T_0)}}.
\end{equation}
As can be seen, the effective term variance is on this occasion not quite quadratic in its dependence on log-moneyness. Further, the form of the denominator will cause it to give rise to singular behaviour for values of $\tilde{d}_{(n)}$ sufficiently far from the ATM level. However, since the model would not in any event be calibrated to give credible results in such extreme cases, this ought not to be a significant limitation in practice.

\section{Numerical Calculations} \label{sec:Numerical}
\paragraph{Caplets}

The above model has been calibrated to caplet market data capturing the skew and smile for SAR 6M LIBOR rates in May 2021. For the mean reversion, we choose a representative fixed value of $\alpha(t) = 0.15$. Choosing other values made little difference to the results obtained. Since the model has three other disposable parameters ($\sigma(t)$, $y^*(t)$ and $\gamma(t)$) it should be possible making use of \eqref{eq:v} to match the ATM level, smile and skew of the implied volatility surface at each maturity for which market data are provided. Indeed this is found to be the case.%
\footnote{In fact the fitting was done to LIBOR caplet data treated as compounded-rate caplets, since the latter were not available and this approach satisfied the exigency of convenient presentation of numerical results for compounded-rate formulae. In any event, it was found to be no more difficult to calibrate the model interpreting the data as being for LIBOR caplets.}
The resulting fit is illustrated in Fig.~\ref{fig:implVol} and is seen to be excellent. Results are expressed as $\sigma_{IV}(K,T_2)$, the constant value of $\sigma(t)$ which would have to be inserted in the Hull-White formula to replicate the price of a (6M tenor) caplet with strike $K$ and maturity $T_2$. For the smile, calibrated values of $\gamma(t)$ ranged from around 300 at the short end to 40 for $t = 10$y while, for the skew, the corresponding values of $\gamma(t)y^*(t)$ were 0.5 and 0.08. The corresponding implied volatility surface is illustrated in  Fig.~\ref{fig:impvolsurface}.

\begin{figure}[H]
	\centering
	\includegraphics[width=0.75\linewidth]{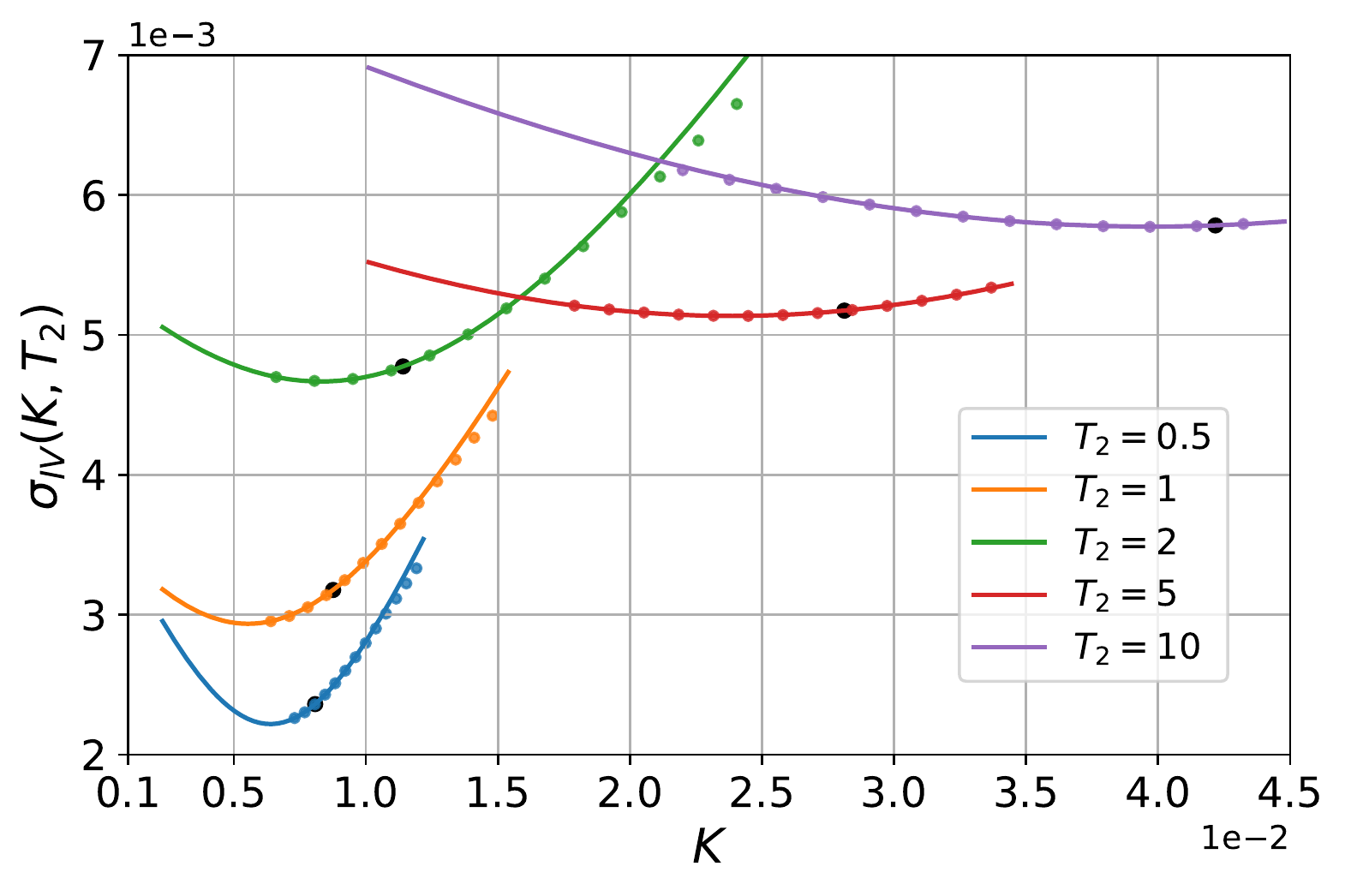}
	\caption{Implied volatilities for various caplet maturities $T_2$}
	\label{fig:implVol}
\end{figure}

\begin{figure}[H]
	\centering
	\includegraphics[width=0.75\linewidth]{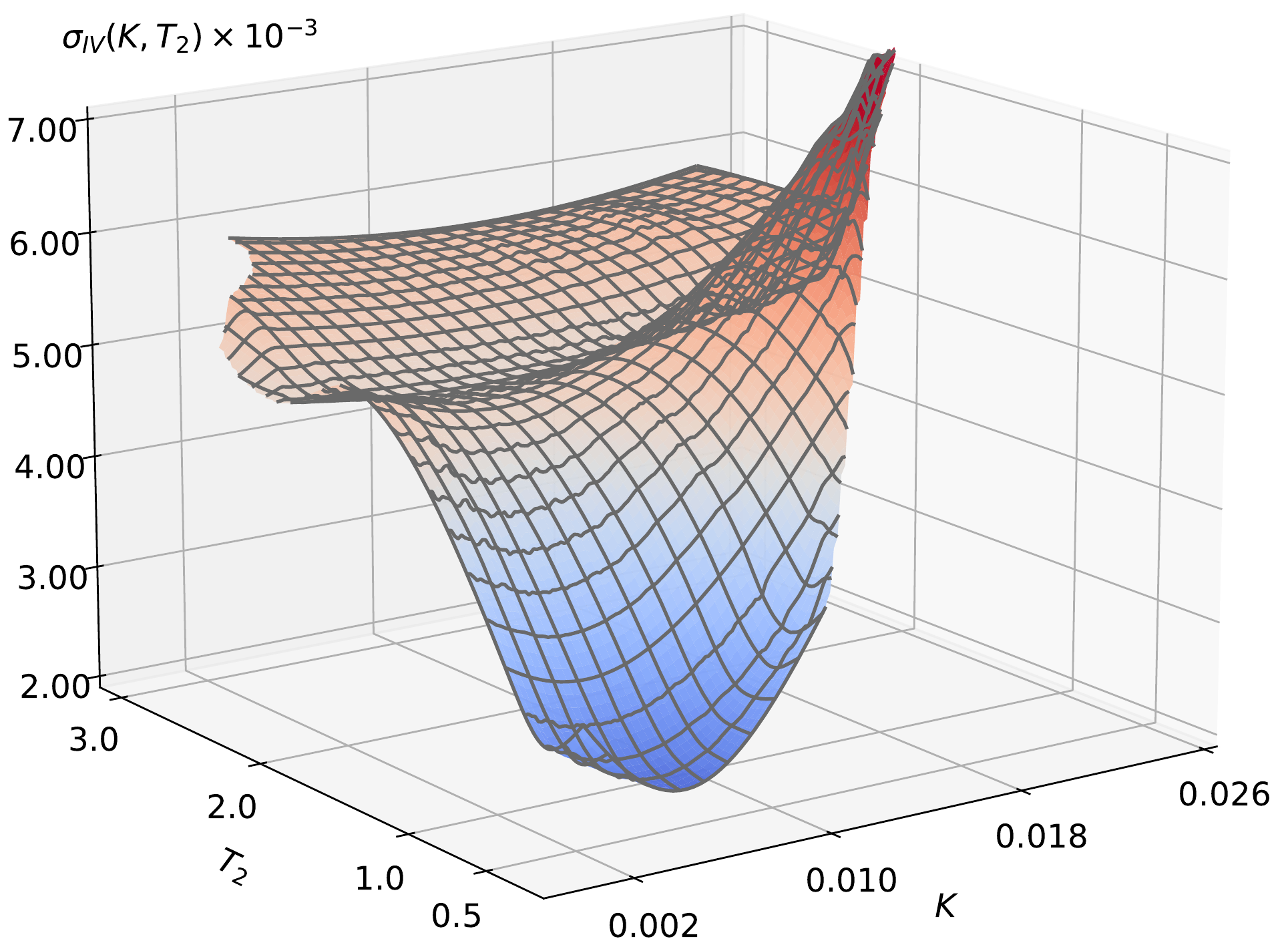}
	\caption{Caplet implied volatility surface}
	\label{fig:impvolsurface}
\end{figure}

Values of $\sigma_{IV}(K,T_2)$ were also explicitly backed out from the compounded-rate caplet price formula \eqref{eq:PV_caplet}. These are shown for comparison in Fig.~\ref{fig:implVol2}. As can be seen, the fit is similarly good close to the money but the asymptotic caplet price formula clearly becomes unreliable for more extreme strikes so caution should be exercised in its direct use to calculate prices. Rather, the use of an asymptotic representation of the \emph{implied volatility surface}, embedding as it does the avoidance of arbitrage (negative option prices) for any reasonable strikes, is to be preferred.%
\footnote{For the same reason analytic option prices under the SABR model are always constructed indirectly from asymptotic representation of implied volatilities.}

\begin{figure}[H]
	\centering
	\includegraphics[width=0.75\linewidth]{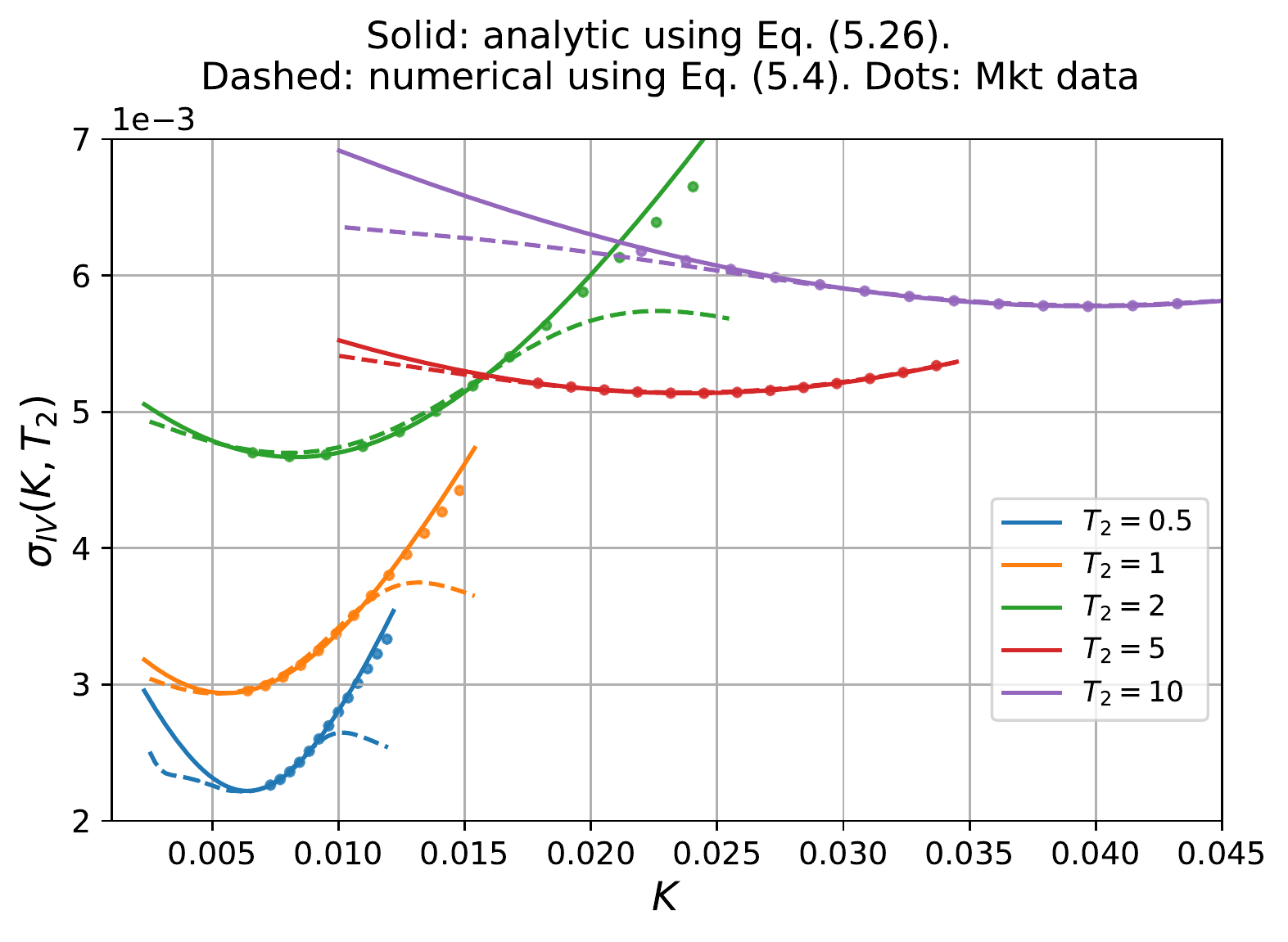}
	\caption{Implied volatilities for various maturities}
	\label{fig:implVol2}
\end{figure}

It is of interest to consider how much impact results from use of a compounded rate rather than a term rate as the caplet underlying. This is illustrated in Figs.~\ref{fig:pvcapletcomparecmplibort212} and \ref{fig:pvcapletcomparecmplibort2510} where PVs based on formulae \eqref{eq:PV_caplet2} and \eqref{eq:PV_LIBOR_caplet2} are compared. As is evident, the impact is much greater for shorter maturities where $T_2-T_1$ is comparable with $T_1$, and becomes rather insignificant when $T_2-T_1\ll T_1$

\begin{figure}[H]
	\centering
	\includegraphics[width=0.7\linewidth]{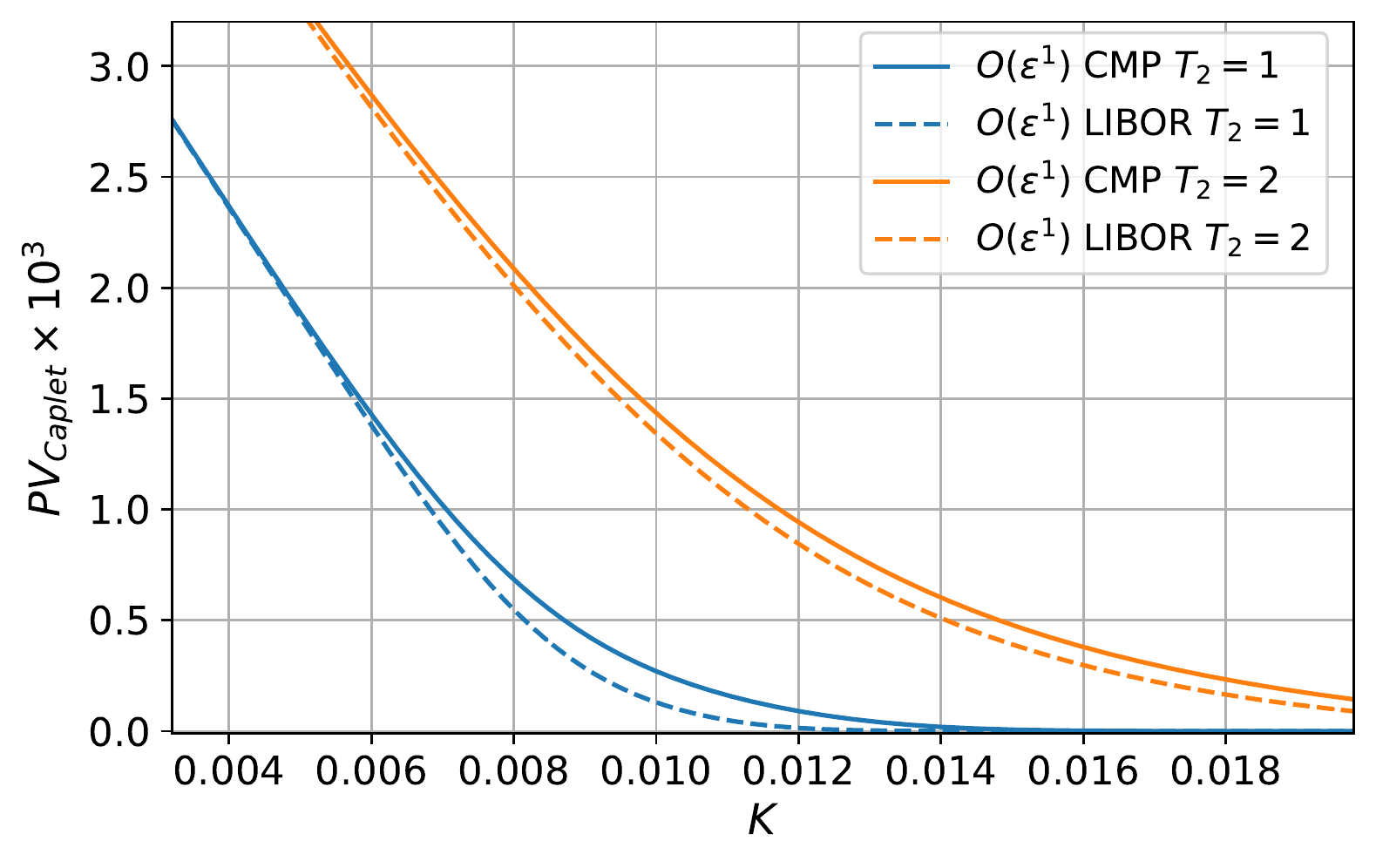}
	\caption[fig6]{Comparison of term-rate and compounded-rate caplet prices for short maturities}
	\label{fig:pvcapletcomparecmplibort212}
\end{figure}

\begin{figure}[H]
	\centering
	\includegraphics[width=0.7\linewidth]{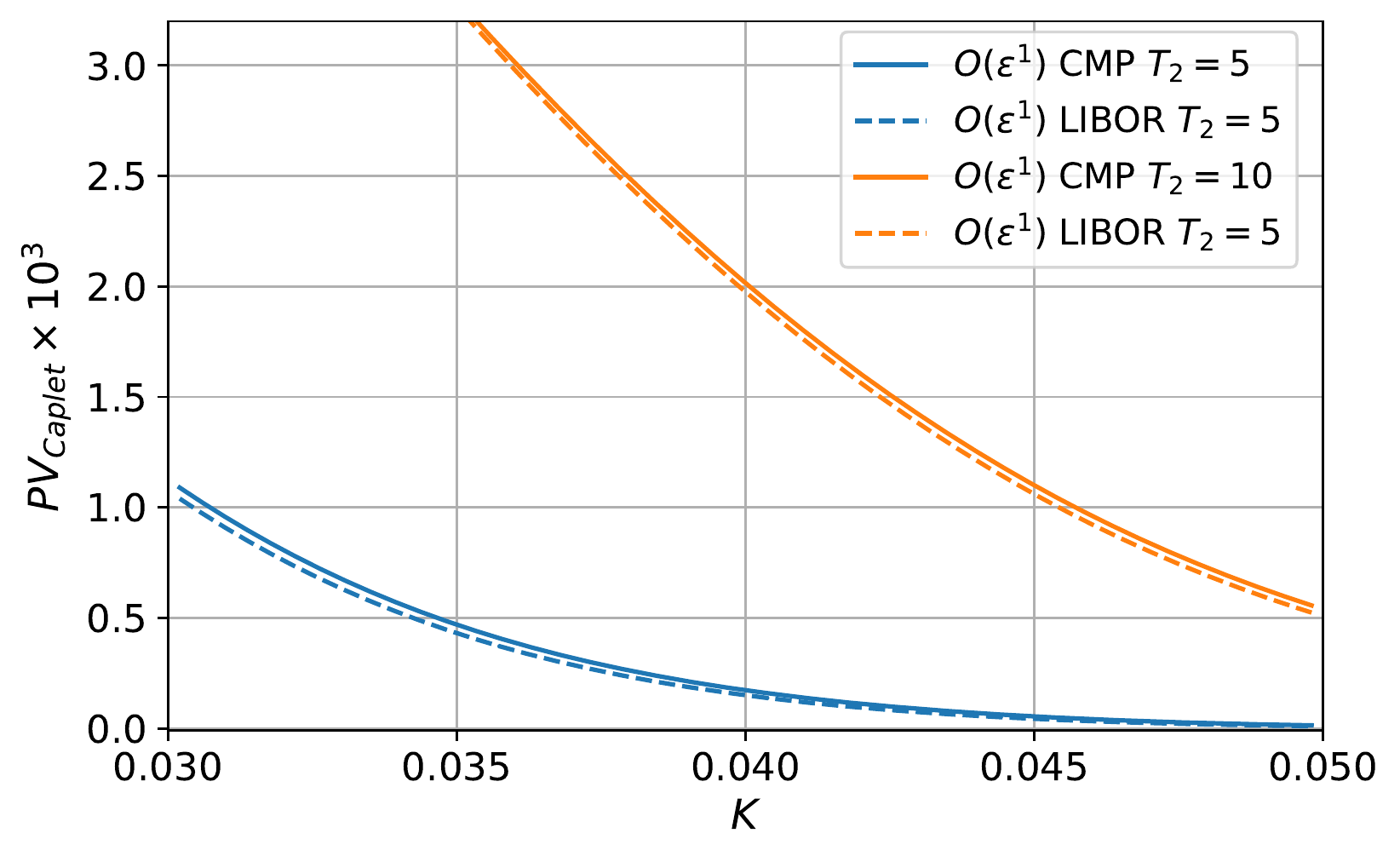}
	\caption{Comparison of term-rate and compounded-rate caplet prices for longer maturities}
	\label{fig:pvcapletcomparecmplibort2510}
\end{figure}

For completeness we also looked at instantaneous forward rates calculated in accordance with \eqref{eq:f^T}, with fixing date $T = t + 0.5$. Results are shown in Fig.~\ref{fig:fwdRates}. Since these results incorporate \emph{second} order terms, unlike the caplet results shown above, they are expected to be highly accurate. The impact of the skew/smile can be inferred by comparing with the dashed lines (Hull-White). As expected, the model is seen to behave just like Hull-White with forward rates linear in the underlying Gaussian variable $y$ when this is small in magnitude.
\begin{figure}[H]
	\centering
	\includegraphics[width=0.75\linewidth]{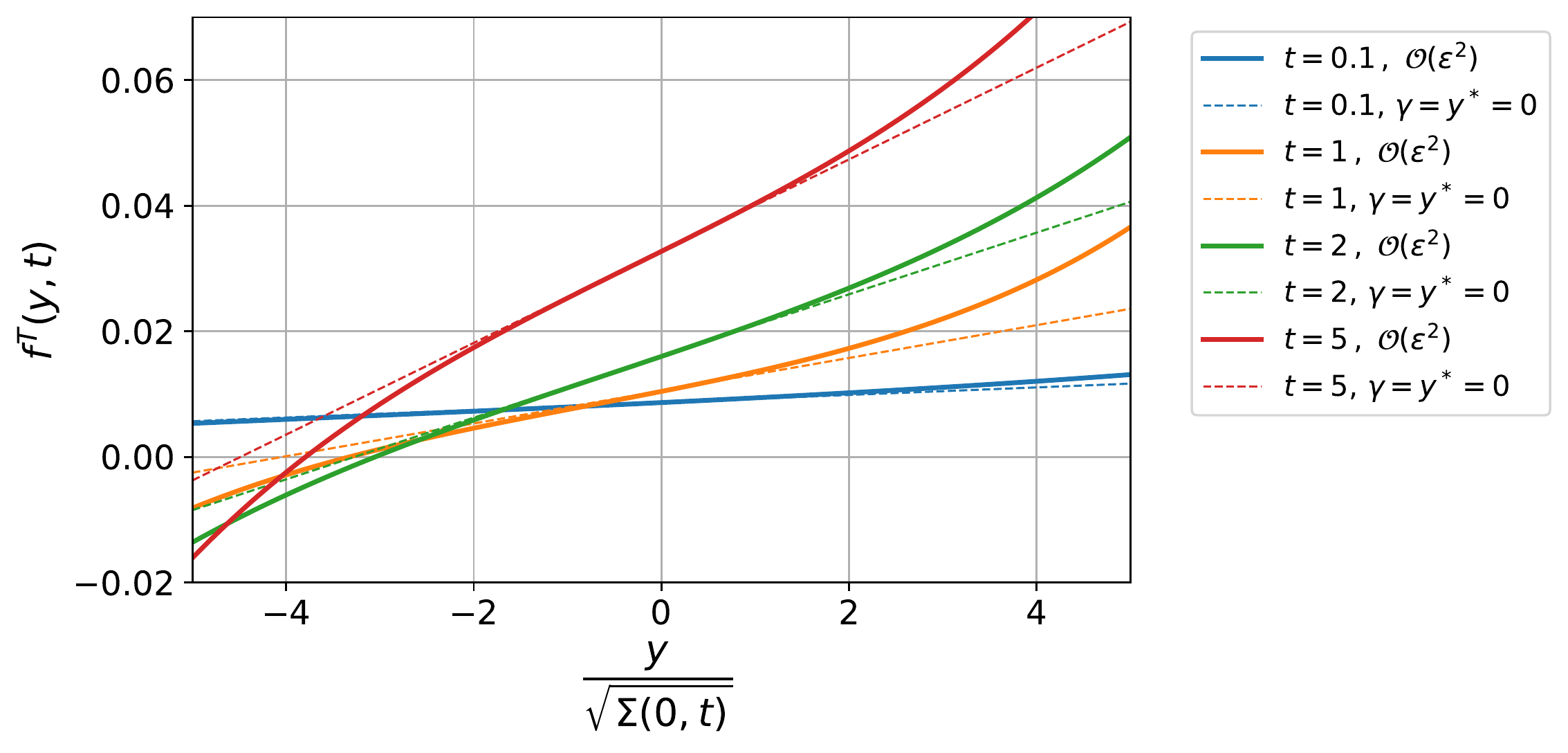}
	\caption{Forward rates for various observation dates $t$ with payment date $T=t+0.5$.}
	\label{fig:fwdRates}
\end{figure}

\paragraph{Swaptions}
The implied volatility surface of swaptions with two payment periods of 3 months is shown in Fig. \ref{fig:swaption}.
\begin{figure}[h]
\subfigure{\includegraphics[width=0.49\linewidth]{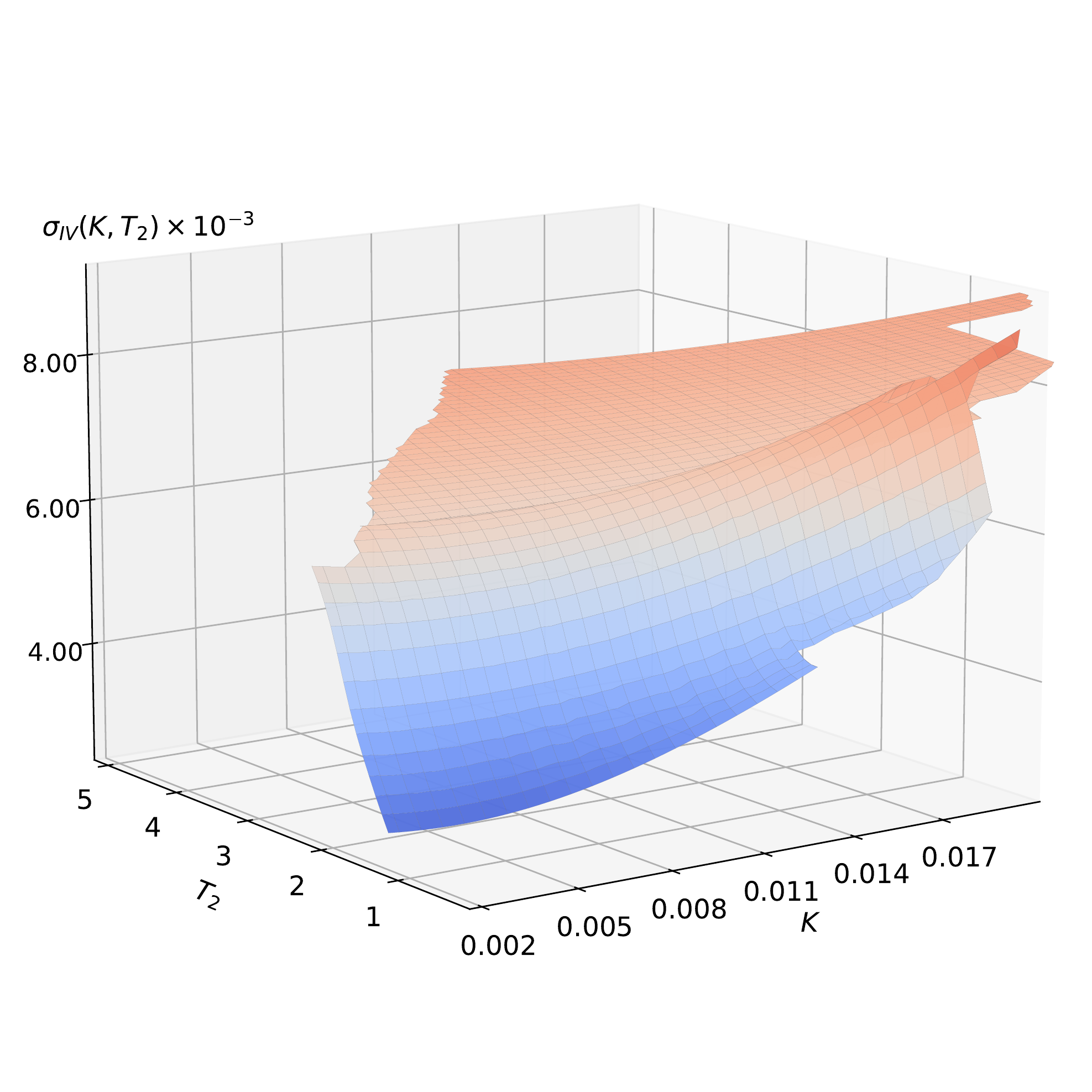}}
\subfigure{\raisebox{10mm}{\includegraphics[width=0.49\linewidth]{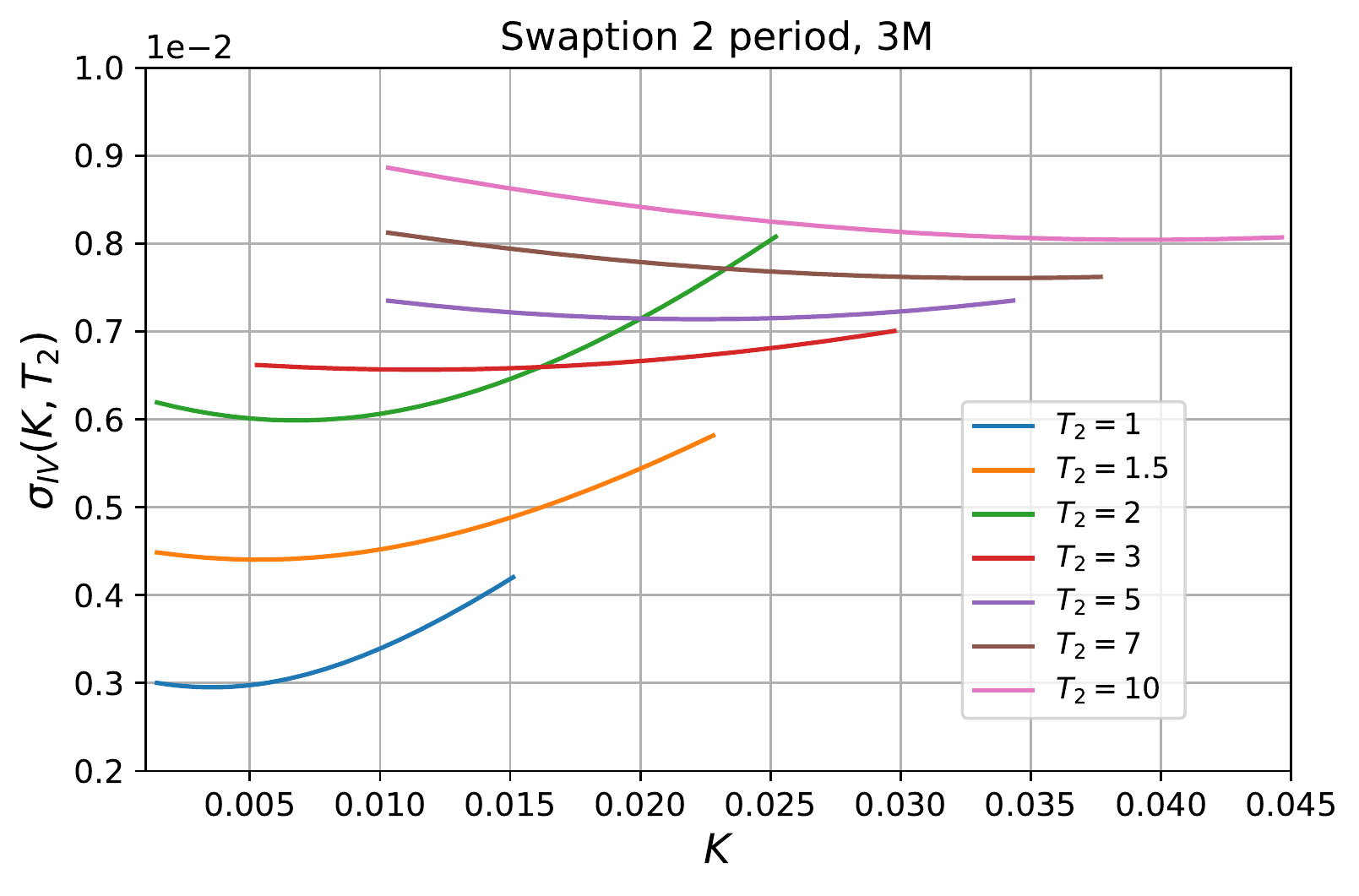}}}
\vspace{-1cm}
\caption{Swaption implied volatilities for various maturities}\label{fig:swaption}
\end{figure}

\newpage

\section{Conclusions}\label{sec:Conclusions}

We have successfully extended the short rate model of \cite{SmilesWithoutTears} to address the pricing of SOFR/SONIA/ESTR caplets based on compounded rates. This we achieved by expressing the result as a perturbation of the analytic kernel of \cite{Compounded_Rates}. The model is seen to have the following properties:
\begin{enumerate}
	\item Convenient analytic representations are available for bond and option prices.
	\item A single calibration addresses options of any maturity and tenor.
	\item Option pricing takes account of volatility skew and smile.
	\item Prices can be calculated for LIBOR or term-rate options as well as for options on backward-looking compounded rates.
\end{enumerate}
We believe our model to be unique in satisfying all of the above criteria.

The asymptotic expansion developed here provides the further benefit of pricing forward rates and LIBOR caplets more accurately than that of \cite{SmilesWithoutTears}. The expression for forward rates is \eqref{eq:f^T}. Caplet prices are best calculated noting that the effective Hull-White term variance is given by \eqref{eq:V_eff} and \eqref{eq:v} for a compounded-rate caplet and by \eqref{eq:V_eff2} and \eqref{eq:v_bar} for a LIBOR or term-rate caplet.

\appendix

\section{Proof of Pricing Kernel Result }\label{sec:PricingKernelProof}
We offer a sketch of the proof that \eqref{eq:G_compounded} represents the pricing kernel for \eqref{eq:fPDE}. We follow \cite{Exact_H-W,PerturbationMethods} in introducing the following definitions and notation.

\begin{definition}\label{Dyson}
	The \emph{time-ordered exponential} or \emph{Dyson series} defined for a linear operator\index{operator!linear} $\mL(t)$ by
	\begin{align}\label{eq:Dyson}
		\mE_t^T (\mL(\cdot))&=I+\sum_{n=1}^\infty \int_{t\le t_1\le \ldots\le t_n\le T}
		\mL(t_1)\ldots\mL(t_n)dt_1\ldots dt_n\cr
		&=I+\sum_{n=1}^\infty \int_t^T\int_{t_1}^T\cdots\int_{t_{n-1}}^T
		\mL(u)\mL(t_2)\ldots\mL(t_{n})dt_n\ldots dt_2dt_1\cr
		&=I+\sum_{n=1}^\infty \int_t^T\int_t^{t_n}\cdots\int_t^{t_2}
		\mL(t_1)\mL(t_2)\ldots\mL(t_{n})dt_1dt_2\ldots dt_n
	\end{align}
	generalises the exponentiation of the integral of a function $f(t)$ to the case of a time-dependent linear operator\index{operator!linear} $\mL(t)$.
\end{definition}
\begin{definition}\label{def:commutator}
	We define the \emph{commutator}\index{commutator} of an operator $\mL(t)$ with an operator $\mV(t)$ where both operators act on the same function space by the following operator:
	\begin{align}
		\ad_{\mL(t_1)}(\mV(t_2))&:=\mL(t_1)\mV(t_2)-\mV(t_2)\mL(t_1).\label{eq:ad}
	\end{align}
\end{definition}
We will have need to combine these two ideas in computing the time-ordered exponential of a commutator operator, interpreted as follows:
\begin{equation}\label{eq:exp_commutator}
	\mE_t^u\left(\ad_{\mL(\cdot)}\right)(\mV(u))=\mV(u) +\int_t^u \ad_{\mL(t_1)}(\mV(u))dt_1 +\int_t^u\int_{t_2}^u \ad_{\mL(t_2)}\left(\ad_{\mL(t_1)}(\mV(u))\right)dt_1dt_2+\ldots.
\end{equation}

\begin{proof}[Proof of Theorem \ref{PricingKernel}]
We start by observing that, although $r(y,t)-\overline{r}(t)=\Oo(\epsilon^{\frac12})$, the term $(r(y,t)-\overline{r}(t))\partial/\partial z$ is $\Oo(1)$ so prevents direct use of a na\"{i}ve perturbation expansion. Noting however that $r(y,t)-\overline{r}(t)\sim y$ for small $y$ we can subtract out the asymptotic representation and look to handle it separately. To that end we write the evolution operator associated with \eqref{eq:fPDE} as $\mE_t^v(\mL_0(\cdot)+\mV_0(\cdot))$ where
\begin{align}
	\mL_0(t)&=-\alpha(t)y\frac{\partial}{\partial  y} +\frac12\sigma^2(t) \frac{\partial^2 }{\partial y^2} - \overline{r}(t),\\
	\mV_0(t)&= \left(r(y,t)- \overline{r}(t) \right)\left(\frac{\partial}{\partial z}-1\right).
\end{align}
Following \cite{SmilesWithoutTears} who apply the exponential expansion formula of \cite{PerturbationMethods} to a closely analogous evolution operator, we can write the Green's function solution of \eqref{eq:fPDE} as
\begin{align}
	G(y,z,t;\eta,\zeta,v)&=D(t,v)\mE_t^v(\mW(t,\cdot))N\left(\frac{\eta-\phi_r(t,v)y} {\sqrt{\Sigma_{rr}(t,v)}}\right),
\end{align}
where%
\footnote{Observe that the only difference at this stage compared to their result is in the inclusion of the derivative w.r.t.~$z$ in \eqref{eq:W}. However there is an important difference in the solution strategy. By subtracting out of \eqref{eq:V_2} below a Hull-White (linear) representation of the discounting rate in addition to the $z$-drift, we are able to incorporate \emph{both} directly into the analytic kernel. So, rather than the stochastic discounting being represented by a perturbation as in \cite{SmilesWithoutTears}, it is only the difference between that and the Hull-White representation thereof which is so represented.}
\begin{align}
	\mW(t,u)&:=\left(R^+(y,t,u)e^{\gamma(u)\Delta y(t,u)\frac{\partial}{\partial y}} - R^-(y,t,u)e^{-\gamma(u)\Delta y(t,u)\frac{\partial}{\partial y}}+R^*(u)\right) \left(\frac{\partial}{\partial z}-1\right),\label{eq:W}
\end{align}
with
\begin{align}\label{eq:R^+-}
	R^{\pm}(y,t,u)&:=\frac{e^{\frac12\gamma^2(u)\Sigma_{rr}(t,u)\pm\gamma(u)(\phi_r(t,u)y +y^*(u))}} {2\gamma(u)},
\end{align}
Let us define
\begin{align}
	\mL_1(t,u)&:=\psi_r(t,u)\left(\frac{\Sigma_{rr}(t,u)}{\phi_r(t,u)}\frac{\partial}{\partial y} +\Sigma_{rz}(t,u)\frac{\partial}{\partial z} \right)\left(\frac{\partial}{\partial z}-1\right). \label{eq:L_1}\\
	\mV_1(t,u)&:=\mW(t,u)-\mL_1(t,u).
\end{align}
We deduce
\begin{align}
	\mE_t^v(\mW(t,\cdot))&=\mE_t^v(\mW_1(t,\cdot))\mE_t^v(\mL_1(t,\cdot)),\\
	\mW_1(t,u)&=\mE_t^u\left(\ad_{\mL_1(t,\cdot)}\right)(\mV_1(t,u))\cr
	&=\Big(R^+(y,t,u)e^{\gamma(u)\left(\Delta y(t,u)\frac{\partial}{\partial y} +\Sigma_{rz}(t,u)\left(\frac{\partial}{\partial z} -1\right)\right)}\cr
	&\qquad -R^-(y,t,u) e^{-\gamma(u)\left(\Delta y(t,u)\frac{\partial}{\partial y} +\Sigma_{rz}(t,u)\left(\frac{\partial}{\partial z} -1\right)\right)} +R^*(u)\Big)\left(\frac{\partial}{\partial z} -1\right)-\mL_1(t,u).
	\label{eq:W_1}
\end{align}
Next defining
\begin{align}
	\mL_2(t,u)&:=\frac{\partial}{\partial u}\mu^*(y,t,u)  \left(\frac{\partial}{\partial z}-1\right)\cr
	&\,=\psi_r(t,u)\left(\phi_r(t,u)(y +\Sigma_{rz}(0,t) +B^*(t,u) \Sigma_{rr}(0,t))\right) \left(\frac{\partial}{\partial z}-1\right), \qquad\label{eq:L_2}\\
	\mV_2(t,u)&:=\mW_1(t,u)-\mL_2(t,u).\label{eq:V_2}
\end{align}
we deduce
\begin{align}
	\mE_t^v(\mW_1(t,\cdot))&=\mE_t^v(\mW_2(t,\cdot))\mE_t^v(\mL_2(t,\cdot)),\\
	\mW_2(t,u)&=\mE_t^u\left(\ad_{\mL_2(t,\cdot)}\right)(\mV_2(t,u))\cr
	&=\Big(R^+(y,t,u)e^{\gamma(u)\left(\Delta y(t,u)\frac{\partial}{\partial y}+\Delta z(t,u)\left(\frac{\partial}{\partial z} -1\right)\right)}\cr
	&\qquad -R^-(y,t,u) e^{-\gamma(u)\left(\Delta y(t,u)\frac{\partial}{\partial y} +\Delta z(t,u)\left(\frac{\partial}{\partial z} -1\right)\right)} +R^*(u)\Big)\left(\frac{\partial}{\partial z} -1\right) -\mL_3(t,u).\qquad
	\label{eq:W_2}
\end{align}
where
\begin{align}\label{eq:L_3}
	\mL_3(t,u) &:=\mE_t^u\left(\ad_{\mL_2(t,\cdot)}\right)(\mL_1(t,u)+\mL_2(t,u))\cr
	&=\mL_1(t,u) +\mL_2(t,u) -B^*(t,u)\psi_r(t,u) \frac{\Sigma_{rr}(t,u)}{\phi_r(t,u)} \left(\frac{\partial}{\partial z}-1\right)^2.\cr
\end{align}
We observe, making use of the identity%
\footnote{This is readily proved by differentiating both sides w.r.t.~$v$.}
\begin{align}
	\int_t^v B^+(t,u,v)\psi_r(t,u)\Sigma_{rr}(t,u)du&=\frac12\Sigma_{zz}(t,v),
	\label{eq:B^+_identity}
\end{align}
that
\begin{align}
	\int_t^v B^*(t,u) \psi_r(t,u) \frac{\Sigma_{rr}(t,u)}{\phi_r(t,u)}du &= B^*(t,v)\frac{\Sigma_{rz}(t,v)}{\phi_r(t,v)} -\frac12\Sigma_{zz}(t,v),
\end{align}
whence we deduce
\begin{align*}
	\int_t^v \mL_3(t,u)du &= \left(\mu^*(y,t,v) +\frac{\Sigma_{rz}(t,v)}{\phi_r(t,v)} \frac{\partial}{\partial y} +\frac12\Sigma_{zz}(t,v)\frac{\partial}{\partial z}\right.\cr 	&\hspace{30pt}\left.+\left(\frac12\Sigma_{zz}(t,v) -B^*(t,v) \frac{\Sigma_{rz}(t,v)}{\phi_r(t,v)}\right)\left(\frac{\partial}{\partial z}-1\right)\right)
\left(\frac{\partial}{\partial z}-1\right).
\end{align*}
Applying $\mE_t^v(\mL_2(t,\cdot))\mE_t^v(\mL_1(t,\cdot))$ to the Gaussian kernel, so increasing its dimension from 1 to 2, gives rise to
\begin{equation}
	G(y,z,t;\eta,\zeta,v)=D(t,v)\mE_t^v(\mW_2(t,\cdot))e^{-\mu^*(y,t,v)}G_0(y,z,t;\eta,\zeta,v),
\end{equation}
with $G_0(\cdot;\cdot)$ defined by \eqref{eq:G_0} above. It is convenient to move the exponential function to the left of this expression. To this end we note that the exponential of a derivative acts as a shift operator, whence
\begin{align*}
	e^{\pm\gamma(u)\Delta y(t,u)\frac{\partial}{\partial y}}e^{-\mu^*(y,t,v)}&=e^{-\mu^*(y,t,v)}e^{\pm\gamma(u) \left(\frac{\partial}{\partial y} -B^*(t,v) \right)\Delta y(t,u)},\\
	\mL_3(t,u)\, e^{-\mu^*(y,t,v)}&= e^{-\mu^*(y,t,v)}\left(\mL_3(t,u) -B^*(t,v) \frac{\Sigma_{rr}(t,u)}{\phi_r(t,u)}\left(\frac{\partial}{\partial z} -1\right)\right).
\end{align*}
We obtain
\begin{align}
	G(y,z,t;\eta,\zeta,v)&=D(t,v)e^{-\mu^*(y,t,v)}\mE_t^v(\mW_3(t,\cdot,v))G_0(y,z,t;\eta,\zeta,v), \label{eq:G}\\
	\mW_3(t,u,v)&=\Big(R_1^+(y,t,u,v)\mM^+(t,u) -R_1^-(y,t,u,v) \mM^-(t,u) +R^*(u)\Big)\left(\frac{\partial}{\partial z} -1\right)\cr
	&\quad-\mL_3(t,u) +B^*(t,v) \psi_r(t,u) \frac{\Sigma_{rr}(t,u)} {\phi_r(t,u)} \left(\frac{\partial}{\partial z} -1\right),
\end{align}
with $R_1^\pm(y,t,u,v)$ defined by \eqref{eq:R_1^+-}, where we have made use of the fact that
$$e^{\pm\gamma(u)\left(\Delta y(t,u)\frac{\partial}{\partial y} +\Delta z(t,u)\frac{\partial}{\partial z}\right)}\equiv \mM^\pm(t,u).$$
Expanding $\mE_t^v(\mW_3(t,\cdot,v))$ as a power series in \eqref{eq:G} and making use of the above identities, we obtain \eqref{eq:G_compounded}, with \eqref{eq:G_0}--\eqref{eq:G_2} as the first three terms, as claimed. This completes the proof.
\end{proof}

\section{Proof of RFR Caplet Pricing Result}\label{sec:CapletPricingProof}

\begin{proof}[Proof of Theorem \ref{CapletPrice}]
	Making use of \eqref{eq:G_compounded}, the first order caplet value as of time $T_1$ with $y_{T_1}=y$ and $z_{T_1}=z_1$ will be
	\begin{align}
		V_{\text{caplet}}(y,T_1)&=\lim\limits_{z\to z_1}\iint_{\R^2} G(y,z,T_1;\eta,\zeta,T_2) P_{\text{caplet}}(z_1,\zeta) d\eta d\zeta\cr
		&\sim\frac{e^{-\mu^*(y,T_1,T_2)}}{\sqrt{\Sigma_{zz}(T_1,T_2)}} \cr
		&\hspace{30pt}\left(1 +\left(\int_{T_1}^{T_2} \mG_1(T_1,t_1,T_2)dt_1 -\mQ(T_1,T_2)\right) \left(\frac{\partial}{\partial z} -1\right)\right)\cr
		&\hspace{30pt}\int_{z_1+\Delta z^*}^\infty\left(e^{\zeta-z_1} -\kappa^{-1}D(T_1,T_2)\right)  N\left(\frac{\zeta-\mu^*(y,T_1,T_2)+\frac12\Sigma_{zz}(T_1,T_2)-z} {\sqrt{\Sigma_{zz}(T_1,T_2)}} \right)d\zeta\Bigg|_{z=z_1}.\cr
	\end{align}
	We see the $z$-dependence drops out at this stage. Letting $V^{(i,j)}$ denote the result of applying $G_i(\cdot;\cdot)$ to the $j$th order contribution to $V_{\text{caplet}}$, we find
	$$V^{(0,0)}(y,T_1) = \Phi(d_1(y,0,T_1)) -\kappa^{-1}D(T_1,T_2)e^{-\mu^*(y,T_1,T_2)} \Phi(d_2(y,0,T_1)),$$
	with $d_1(\cdot)$ and $d_2(\cdot)$ defined by \eqref{eq:d_1} and \eqref{eq:d_2}, respectively.
	Applying the first-order operator and making use of the identity
	\begin{align}
		N(d_1(y,0,t)) -\kappa^{-1}D(T_1,T_2)e^{-\theta(y,t)} N(d_2(y,0,t) &= 0
	\end{align}
	yields at first order%
	\footnote{We use the convention here that the integration in the operator expression is applied \emph{after} the operator has been applied to the operand; likewise for the assignment of the $z$-value.}
	\begin{equation*}
		V^{(1,0)}(y,T_1) =\kappa^{-1}D(T_1,T_2)e^{-\mu^*(y,T_1,T_2)} \left(\int_{T_1}^{T_2} \mG_1(T_1,t_1,T_2)dt_1 -\Sigma_{zz}(T_1,T_2)\frac{\partial}{\partial z} \right) \Phi(d_2(y,z-z_1,T_1)) \bigg|_{z=z_1}.
	\end{equation*}
	Here the effect of the shift operators is to transform
	$$\Phi(d_2(y,0,T_1) \to \Phi(d_2(y,\pm\gamma(t_1)\Delta z_1(t_1),T_1) ,$$
	where
	\begin{equation}
		\Delta z_1(t_1):=\Sigma_{rz}(T_1,t_1) +B^+(T_1,t_1,T_2)\Sigma_{rr}(T_1,t_1). \label{eq:Delta_z_1}
	\end{equation}
	Combining this with the leading-order term (and noting that there is no first-order payoff contribution), we obtain $V_{\text{caplet}}(y,T_1)\sim V^{(0,0)}(y,T_1) +V^{(1,0)}(y,T_1)$. Taking this in turn as the payoff at $T_1$ and valuing as of time $t\in[0,T_1)$ gives rise to
	\begin{align}
		V_{\text{caplet}}(y,t)&=\iint_{\R^2} G(y,z,t;\eta,\zeta,T_1)V_{\text{caplet}}(\eta,T_1) d\eta d\zeta\cr
		&\sim V^{(0,0)}(y,t) +V^{(0,1)}(y,t) +V^{(1,0)}(y,t),
	\end{align}
 	The $\zeta$-integration is in this case trivial. In particular we have the quasi-Hull-White result
	\begin{align}\label{eq:V^(0,0)}
		V^{(0,0)}(y,t)= e^{-\mu^*(y,t,T_1)}\left(D(t,T_1)\Phi(d_1(y,0,t)) -\kappa^{-1}D(t,T_2)e^{-\theta(y,t)} \Phi(d_2(y,0,t))\right),
	\end{align}
	whence
	\begin{align}\label{eq:V^(0,0)(0,0)}
	V^{(0,0)}(0,0)= D(0,T_1)\Phi(d_1(0,0,0)) -\kappa^{-1}D(0,T_2)\Phi(d_2(0,0,0)),
	\end{align}
	Considering next the action of the first-order Green's function term on the leading-order term, we obtain
	\begin{align}\label{eq:V^(1,0)}
		V^{(1,0)}(y,t)&=-e^{-\mu^*(y,t,T_1)}\left(\int_t^{T_1}\mG_1(t,t_1,T_1)dt_1 -\frac{\Sigma_{rz}(t,T_1)}{\phi_r(t,T_1)} \frac{\partial}{\partial y}\right)\cr
		&\hspace{100pt}\left(D(t,T_1)\Phi(d_1(y,0,t)) -\kappa^{-1}D(t,T_2) e^{-\theta(y,t)}\Phi(d_2(y,0,t))\right).\qquad
	\end{align}
	Setting $y=t=0$, we find
	\begin{align}\label{eq:V^(1,0)(0,0)}
	V^{(1,0)}(0,0)&\sim-D(0,T_1)\int_0^{T_1}\mG_1(0,t_1,T_1)dt_1\, \Phi(d_1(y,0,0))\Bigg|_{y=0}\cr
	&\quad+\kappa^{-1}D(0,T_2)\Bigg(\int_0^{T_1}\mG_1(0,t_1,T_2)dt_1\,e^{-\theta(y,0)} +B^*(T_1,T_2)\Sigma_{rz}(0,T_1)\Bigg)\Phi(d_2(y,0,0))\Bigg|_{y=0}.
	\end{align}
	Considering next the action of the leading-order Green's function term on the first-order term $V^{(1,0)}(y,T_1) $, we obtain
	\begin{align}
		V^{(0,1)}(y,t)&\sim\kappa^{-1}D(t,T_2)e^{-\mu^*(y,t,T_1)-\theta(y,t)}\cr
		&\hspace{20pt}\bigg(\int_{T_1}^{T_2}R^+(y-\Delta y_1(t),t,t_1) e^{-\gamma(t_1)(B^+(T_1,t_1,T_2)\Sigma_{rr}(T_1,t_1)+\Sigma_{rz}(T_1,t_1))}\cr
		&\hspace{120pt}\Phi(d_2(y+\gamma(t_1)\phi_r(T_1,t_1)\Delta y(t,T_1),\gamma(t_1)\Delta z_1(t_1),t)) dt_1\cr
		&\hspace{30pt}-\int_{T_1}^{T_2}R^-(y+\Delta y_1(t),t,t_1) e^{\gamma(t_1)(B^+(T_1,t_1,T_2)\Sigma_{rr}(T_1,t_1)+\Sigma_{rz}(T_1,t_1))}\cr
		&\hspace{120pt}\Phi(d_2(y-\gamma(t_1)\phi_r(T_1,t_1)\Delta y(t,T_1), -\gamma(t_1)\Delta z_1(t_1),t))dt_1\cr
		&\hspace{30pt}+\bigg(\int_{T_1}^{T_2}\left(R_1^*(t_1) +e^{\frac12\gamma^2(t_1)\Sigma_{rr}(0,t_1)} B^+(T_1,t_1,T_2) \Sigma_{rr}(t,t_1)\right)dt_1 -\theta(y,t)\bigg)\cr
		&\hspace{330pt}\Phi(d_2(y,0,t))\cr
		&\hspace{30pt} -\sqrt{B^{*2}(T_1,T_2)\Sigma_{rr}(t,T_1) +\Sigma_{zz}(T_1,T_2)} N(d_2(y,0,t))\bigg), \label{eq:V^(0,1)}
	\end{align}
	with $\Oo(\epsilon)$ relative errors, where
	\begin{equation}
		\Delta y_1(t):=\frac{\Sigma_{rz}(t,T_1)}{\phi_r(t,T_1)} +B^*(T_1,T_2) \frac{\Sigma_{rr}(t,T_1)} {\phi_r(t,T_1)}. \label{eq:Delta_y_1}
	\end{equation}
	We conclude, making use of the fact that $\mu^*(0,0,T_1) =\theta(0,0)=0$, that
	\begin{align}\label{eq:V^(0,1)(0,0)}
		V^{(0,1)}(0,0)&\sim\kappa^{-1}D(0,T_2)\bigg(\int_{T_1}^{T_2} R_1^+(0,0,t_1,T_2) \Phi(d_2(\gamma(t_1) \phi_r(T_1,t_1) \Delta y(0,T_1), \gamma(t_1)\Delta z_1(t_1),0))dt_1 \cr
		&\hspace{75pt}-\int_{T_1}^{T_2} R_1^-(0,0,t_1,T_2)\Phi(d_2(-\gamma(t_1) \phi_r(T_1,t_1) \Delta y(0,T_1), -\gamma(t_1)\Delta z_1(t_1),0))dt_1 \cr
		&\hspace{75pt}+\int_{T_1}^{T_2} R_1^*(t_1)dt_1 \Phi(d_2(0,0,0))\cr
		&\hspace{75pt}  -\sqrt{B^{*2}(T_1,T_2)\Sigma_{rr}(0,T_1) +\Sigma_{zz}(T_1,T_2)} N(d_2(0,0,0)) \bigg).
 	\end{align}
	We make use here of the identity that, for $t\le T_1\le t_1\le T_2$,
	\begin{align*}
		 \Sigma_{rz}(&T_1,t_1) +B^+(T_1,t_1,T_2)\Sigma_{rr}(T_1,t_1) +\phi_r(t,t_1)\Delta y_1(t,T_1)\cr
		&=\Sigma_{rz}(T_1,t_1) +B^+(T_1,t_1,T_2)\Sigma_{rr}(t,t_1) +\phi_r(T_1,t_1)(\Sigma_{rz}(t,T_1) +B^*(T_1,t_1)\Sigma_{rr}(t,T_1))\cr
		&=\Sigma_{rz}(t,t_1) +B^+(t,t_1,T_2)\Sigma_{rr}(t,t_1) -\phi_r(t,t_1)\Delta y_2(t,t_1)
	\end{align*}
	where
	\begin{align}\label{eq:Delta_y_2}
		\Delta y_2(t,t_1)&=\frac1{\phi_r(t,t_1)}\int_{T_1}^{T_2} \left(\psi_r(t,u) -\psi_r(T_1,t_1)\right)\cr
		&\hspace{100pt}\left(\phi_r(u,t_1)\Sigma_{rr}(T_1,u)\mathbbm1_{u\le t_1} +\phi_r(T_1,u)\Sigma_{rr}(t,T_1)\mathbbm1_{u>t_1}\right)du.
	\end{align}
	We neglect this subdominant adjustment factor in our asymptotic estimate, arguing that the result would otherwise not be consistent with the price of deep in-the-money forward contracts. Combining all $V^{(i,j)}(0,0)$ terms and simplifying gives rise to \eqref{eq:PV_caplet}. This completes the proof.
\end{proof}

\section{Calibration to Caplet market data}\label{sec:Calibration_2}
The calibration of the forward curve has already been explained in Sec.~\ref{sec:Calibration}. As mentioned before, for the mean reversion, we choose a representative fixed value of $\alpha(t) = 0.15$.
The calibration of $\sigma(t)$, $y^*(t)$ and $\gamma(t)$ follows from matching the analytical implied volatility formula to the implied volatilities quoted in the market for 6-month caps. We take piece-wise parametrisations for these parameters and calibrate on the available maturities. The result of the calibration is given in Fig. \ref{fig:calibrated_params}.

\begin{figure}[H]
	\centering
	\includegraphics[width=0.32\linewidth]{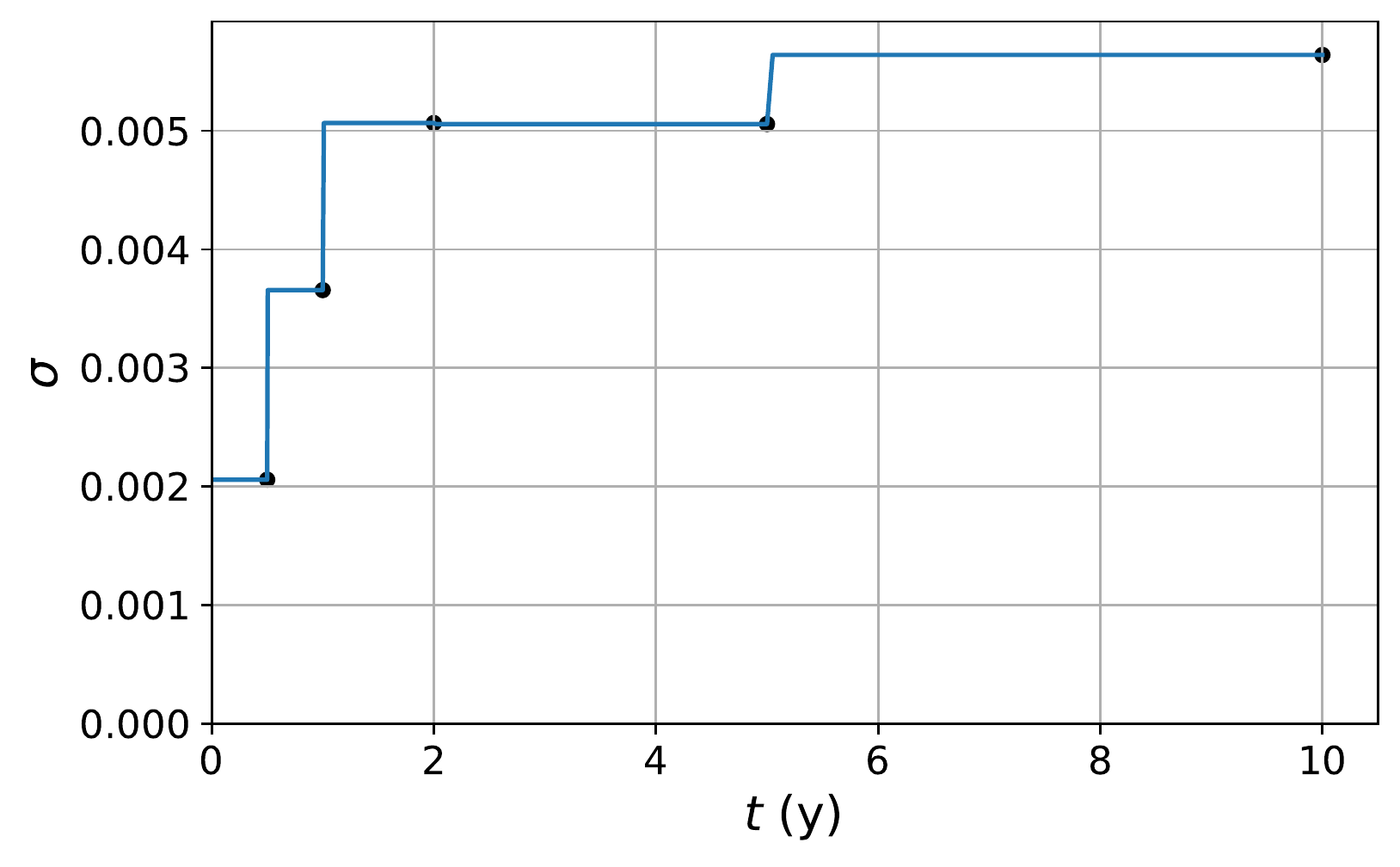}
	\includegraphics[width=0.32\linewidth]{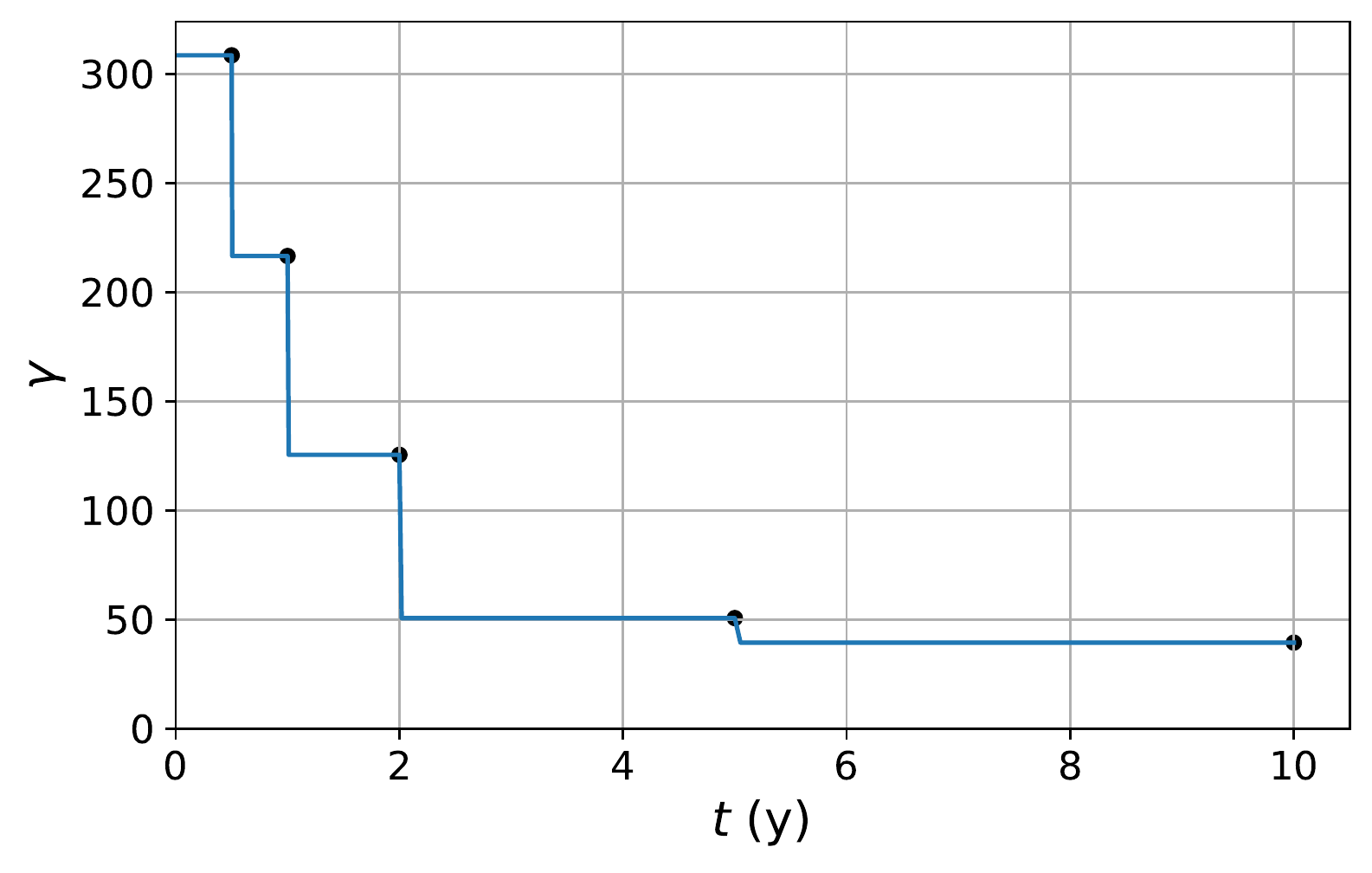}
	\includegraphics[width=0.32\linewidth]{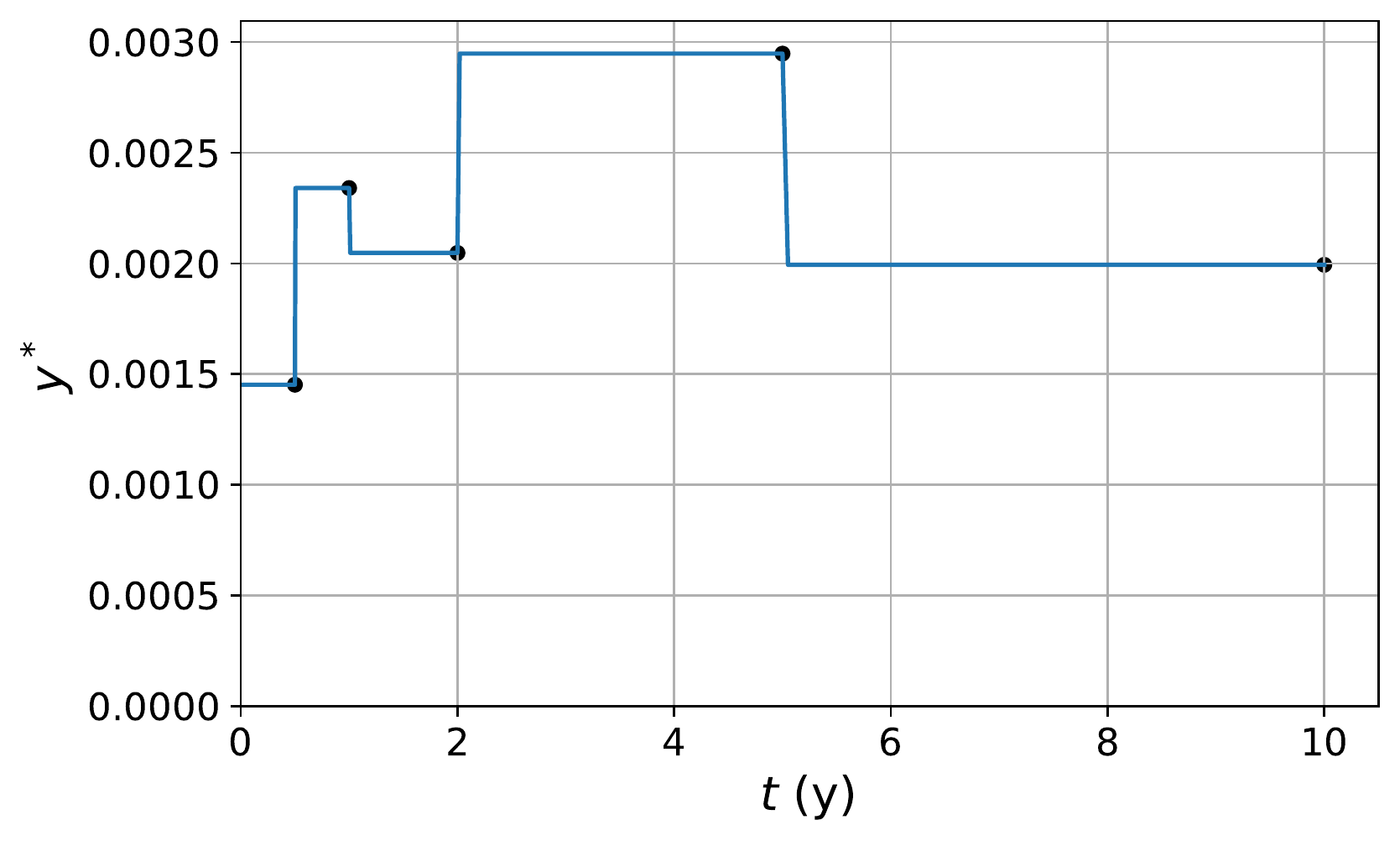}
	\caption{Piece-wise functions for $\sigma(t)$, $y^*(t)$ and $\gamma(t)$ after calibration to market data. The black dots indicate the cap maturities used for calibration.}
	\caption{Piece-wise functions for $\sigma(t)$, $y^*(t)$ and $\gamma(t)$ after calibration to market data. The black dots indicate the cap maturities used for calibration.}
	\label{fig:calibrated_params}
\end{figure}

\addcontentsline{toc}{section}{References}

\bibliography{Bibliography}

\end{document}